\pdfoutput=1
\documentclass[journal]{IEEEtran}
\usepackage{amsmath,amsfonts,amssymb}
\usepackage{graphicx}
\usepackage{cite}
\usepackage{multirow}
\usepackage{algorithm}
\usepackage{algorithmic}
\usepackage{subfigure}
\usepackage{url}

\IEEEoverridecommandlockouts

\newtheorem{theorem}{Theorem}

\newtheorem{lemma}{Lemma}
\newtheorem{corollary}{Corollary}

\newtheorem{example}{Example}

\newtheorem{fact}{Fact}

\begin{document}
\title{On the Capacity Region of Two-User Linear Deterministic Interference Channel and Its Application to Multi-Session Network Coding}
\author{Xiaoli Xu, Yong Zeng, Yong Liang Guan and Tracey, Ho
\thanks{X. Xu, Y. Zeng and Y. L. Guan are with the School of Electrical and Electronic Engineering, Nanyang Technological University, Singapore 639801
(email: \{xu0002li, ze0003ng, eylguan\}@e.ntu.edu.sg)}
\thanks{T. Ho is with the Department of Electrical Engineering, California Institute of Technology, Pasadena, California 91125, USA (email:tho@caltech.edu)}
\thanks{This work was supported by the Advanced Communications Research Program DSOCL06271, a research grant from the Directorate of Research and Technology (DRTech), Ministry of Defence, Singapore.}}
\maketitle
\begin{abstract}

In this paper, we study the capacity of the two-user multiple-input multiple-output (MIMO) linear deterministic interference channel (IC), with possible correlations within/between the channel matrices. The capacity region is characterized in terms of the rank of the channel matrices. It is shown that \emph{linear precoding} with Han-Kobayashi type of rate-splitting, i.e., splitting the information-bearing symbols of each user into common and private parts, is sufficient to achieve all the rate pairs in the derived capacity region. The capacity result is applied to obtain an achievable rate region for the double-unicast networks with random network coding at the intermediate nodes, which can be modeled by the two-user MIMO linear deterministic IC studied. It is shown that the newly proposed achievable region is strictly larger than the existing regions in the literature.

\end{abstract}

\section{Introduction}\label{sec:intro}
The two-user interference channel (IC) models the communication between two source-receiver pairs via a common channel. As there is no cooperation between any of the sources and receivers, the transmission from one source to its corresponding receiver is viewed as interference by the other source-receiver pair.  The capacity for the general two-user IC is a long-term fundamental open problem since first studied by Shannon in \cite{Shannon61}. The best achievable rate region to date is established by Han and Kobayashi \cite{Han81}, where a common-private rate splitting technique is employed to enable the receiver to partially decode and subtract the interference. It was later pointed out in \cite{Gamal82} that the Han-Kobayashi region matches the capacity region for a class of deterministic ICs. Recent breakthroughs in studying the capacity of two-user IC shows that a simple linear deterministic model captures the key properties of the Gaussian channel \cite{Avestimehr11} and it leads to capacity characterization within a constant number of bits \cite{Bresler08}. Moreover, the linear deterministic channel model is closely related with the degrees-of-freedom (DoF) characterization of the two-user Gaussian IC as both of them focus on the high signal-to-noise ratio regime where noise is de-emphasized in order to get a better understanding of the interference \cite{Jafar09}. In fact, as pointed out in \cite{Bresler08}, the capacity region of the deterministic channel is, when properly scaled, equal to the generalized DoF region.

The DoF characterization of the two-user MIMO Gaussian IC is given in \cite{Jafar07} with the assumption that the channels are nondegenerated, i.e. all channel matrices are full rank and independent of each other. Moreover, it is shown that zero forcing (ZF), which is normally a suboptimal strategy, is sufficient to achieve all DoF. Unfortunately, the result obtained in \cite{Jafar07} is no longer applicable if the channel matrices are correlated and/or rank deficient. In this paper, we generalize the results in \cite{Jafar07} by removing the assumptions on the channel matrices. Specifically, we study the capacity of the two-user MIMO deterministic IC, i.e., the IC studied in \cite{Jafar07} but with the additive noise term set to zero. In contrast to \cite{Jafar07} where all channels are assumed to be of full rank and independent, we consider the more general case that the channel matrices may be rank deficient and/or correlated with each other. This channel model is of theoretical interest by itself and also renders its application in obtaining an achievable rate region for the double-unicast networks as discussed later. The exact capacity region (or equivalently the DoF region if the noise term is non-zero) is characterized in terms of the rank of the channel matrices. The capacity achieving scheme is given by \emph{linear precoding} together with Han-Kobayashi type of rate-splitting, i.e., the data symbols are split into a common part, which is decodable at both receivers, and a private part, which is decodable at the intended receiver only. Furthermore, the precoder consists of a random spreading matrix, which maps the data symbols into a subspace of higher dimension, and a ZF precoding matrix, which effectively pre-cancel the private symbols. Note that the capacity achieving scheme is quite simple and efficient due to its linear precoding and decoding processes.


As an application for the capacity results derived for the two-user deterministic IC, we obtain a strictly enlarged achievable rate region  as compared to existing schemes for the double-unicast networks. Both the two-user ICs and double-unicast networks share the similarity that each source is intended to send an independent message to its corresponding destination and it causes interference to the other source-destination pair. However, different from wireless ICs where the channels are determined by nature (e.g., channel gain, fading and so on), in double-unicast networks, the signals from the source pass through a set of intermediate nodes before arriving at the destination. As a result, the processing strategies used by the intermediate nodes directly affect the achievable rates of the network. It was recently shown that by employing network coding at the intermediate nodes, the throughput can be significantly improved as
compared to traditional techniques such as routing \cite{Erez09}. For single-session networks, it has been shown that
random network coding is capacity achieving \cite{Ho06}. However, for double-unicast networks, the optimal network coding strategy remains unknown. In this paper, with all the intermediate nodes performing random linear network coding, we show that the input-output relationship of the double-unicast networks can be modeled by the two-user linear deterministic IC. As such, the capacity results we derived can be directly applied to obtain an achievable region.

There are some existing works on the achievable rate region characterizations for the double-unicast networks \cite{Traskov06,Erez09,Huang11}.
In \cite{Traskov06}, the problem was formulated as a linear programming problem by packing butterfly structures in the network. However, this approach
is limited since only XOR operation is allowed in the butterfly structure. In \cite{Erez09}, an achievable region was obtained by using the so-called
``rate-exchange'' method, where starting from the single-user rate for one of the users, a non-zero rate for the other user is achieved by directly sacrificing the
single-user rate via some interference nulling schemes. In \cite{Huang11}, another rate region was obtained by using some precoding techniques.
As stated in \cite{Huang11}, the region obtained in \cite{Erez09} does not contain that given in \cite{Huang11}, and vice versa. In this paper,
we will show analytically that our proposed region by utilizing the linear deterministic IC model contains both that in \cite{Erez09} and \cite{Huang11}.

The rest of this paper is organized as follows. Section \ref{sec:prob} introduces the system model. The main results are presented in Section \ref{sec:main}, which characterizes the capacity region of the two-user linear deterministic IC. In Section \ref{sec:nc}, the obtained capacity results are utilized to derive an achievable region for the double-unicast networks and the comparison with existing works is given. Finally, this paper is concluded in Section \ref{sec:conclusion}.

\emph{Notations}: Throughout this paper, $\mathbb{R}^{n\times m}$ denotes the space of  $n \times m$ real matrices and $\mathbb{F}_q^{n\times m}$ represents the space of $n\times m$ matrices in a finite field of order $q$. Vectors are represented by boldface lower-case letters, e.g. $\mathbf{v}$.  Matrices are denoted by italicized capital letters, e.g. $A$. $A^T$ denotes the transpose of $A$. $0_{n\times m}$ represents a zero matrix of size $n\times m$ and the subscripts are omitted when there is no ambiguity from the context. The range (or column space) and null space of a matrix $A$ are denoted by $\mathcal{R}(A)$ and $\mathcal{N}(A)$, respectively.
\section{System Model}\label{sec:prob}
Consider the two-user linear deterministic IC as shown in Fig.\ref{F:DIC}, with sources denoted by $s_1,s_2$ and  destinations by $t_1,t_2$, respectively. The input-output relationships are given by\footnote{The result can be extended to complex-valued channels as well}
\begin{equation}
\begin{aligned}\label{eq:DIC}
\mathbf{y}_1&=H_{11}\mathbf{x}_1+H_{12}\mathbf{x}_2\\
\mathbf{y}_2&=H_{21}\mathbf{x}_1+H_{22}\mathbf{x}_2
\end{aligned}
\end{equation}
where $\mathbf{x}_1\in\mathbb{R}^{m_1}$, $\mathbf{x}_2\in\mathbb{R}^{m_2}$ represent the independent input vectors  by $s_1$ and $s_2$, respectively; $H_{ij}\in \mathbb{R}^{n_i \times m_j},i,j\in\{1,2\}$ denotes the channel matrix from $s_i$ to $t_j$; and  $\mathbf{y}_i\in\mathbb{R}^{n_i}$ is the channel output at the $i^{th}$ receiver who is interested in recovering $\mathbf{x}_i$ only.

\begin{figure}[htb]
\centering
\includegraphics[scale=0.8]{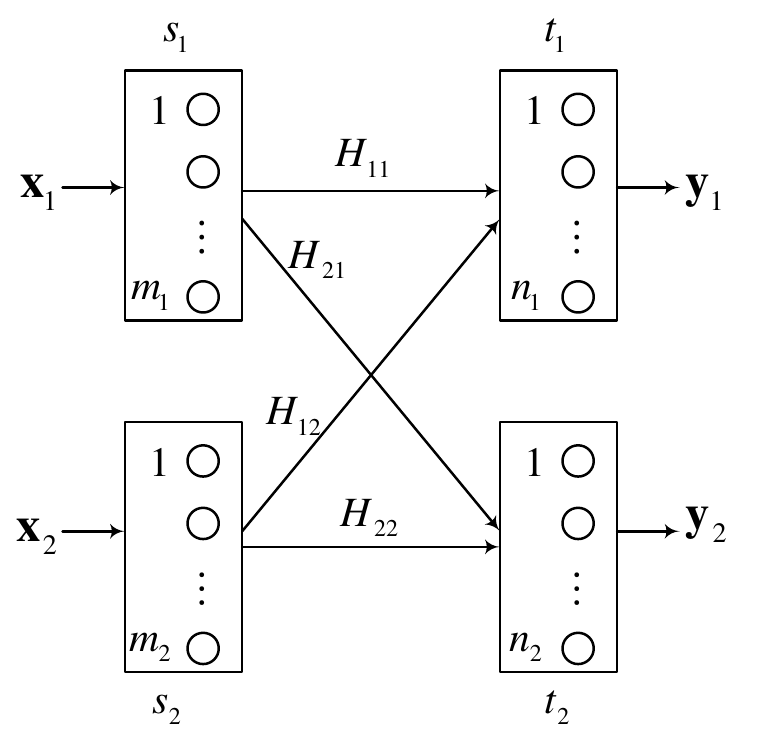}
\caption{Two-user MIMO linear deterministic IC}
\label{F:DIC}
\end{figure}

We are interested in determining the number of independent symbols that can be simultaneously and reliably transmitted from $s_1$ to $t_1$ and  $s_2$ to $t_2$, denoted as $R_1$ and $R_2$, respectively. The point-to-point capacity of the channel from  $s_i$ to $t_i$ is determined by the rank of the channel matrix, i.e., $R_i=\mathrm{rank}(H_{ii})$, $i\in\{1,2\}$. Note that when all the channel matrices are randomly generated (thus full rank) and independent of each other, the region of $R_1, R_2$ is equivalent to the DoF region of the two-user Gaussian IC and is characterized by \cite{Jafar07}:
\begin{equation}\label{eq:JafarBound}
\begin{aligned}
R_1\leq \min&\{m_1,n_1\}\\
R_2\leq \min&\{m_2,n_2\}\\
R_1+R_2\leq \min&\{m_1+m_2,n_1+n_2,\\
&\max(m_1,n_2),\max(m_2,n_1)\}
\end{aligned}
\end{equation}

However, for general $H_{11}$, $H_{12}$, $H_{21}$ and $H_{22}$ that may be correlated and/or rank deficient, the region given by \eqref{eq:JafarBound} is no longer applicable. Therefore, a more general result than \eqref{eq:JafarBound} is necessary.

Before presenting the main results, the following assumptions are made without loss of generality:
\begin{align}
\mathrm{rank}[\begin{matrix}H_{11}& H_{12}\end{matrix}]&=n_1\label{eq:asm1}\\
\mathrm{rank}[\begin{matrix}H_{21}& H_{22}\end{matrix}]&=n_2\label{eq:asm2}\\
\mathrm{rank}\bigg[\begin{matrix}H_{11} \\ H_{21}\end{matrix}\bigg]&=m_1\label{eq:asm3}\\ \mathrm{rank}\bigg[\begin{matrix}H_{12}\\H_{22}\end{matrix}\bigg]&=m_2\label{eq:asm4}
\end{align}

The above assumptions can be validated by showing that the capacity region will be unaffected by removing  the dependent received or transmitted symbols. For example, if the matrix $[\begin{matrix}H_{11}& H_{12}\end{matrix}]$ does not have full row rank, certain symbols received by $t_1$ are linear combinations of the rest and therefore they can be discarded without losing any information.

\section{Main Results}\label{sec:main}
\begin{theorem}\label{tho:main}
For the two-user MIMO linear deterministic IC given in \eqref{eq:DIC}, the capacity region is characterized by
\begin{align}
R_1\leq & \mathrm{rank}(H_{11})\label{eq:1}\\
R_2\leq & \mathrm{rank}(H_{22})\label{eq:2}\\
R_1+R_2\leq & n_1+m_2-\mathrm{rank}(H_{12})\label{eq:3}\\
R_1+R_2\leq & n_2+m_1-\mathrm{rank}(H_{21})\label{eq:4}\\
R_1+R_2\leq & \mathrm{rank}\bigg[\begin{matrix}H_{11} & H_{12} \\ H_{21} & 0_{n_2\times m_2}\end{matrix}\bigg] + \mathrm{rank}\bigg[\begin{matrix}H_{21} & H_{22} \\ 0_{n_1\times m_1} & H_{12}\end{matrix}\bigg] \nonumber \\ &-\mathrm{rank}(H_{21})-\mathrm{rank}(H_{12})\label{eq:5}\\
2R_1+R_2\leq & n_1+m_1+\mathrm{rank}\bigg[\begin{matrix}H_{21} & H_{22} \\ 0_{n_1\times m_1} & H_{12}\end{matrix}\bigg]\nonumber \\
  &-\mathrm{rank}(H_{21})-\mathrm{rank}(H_{12})\label{eq:6}\\
R_1+2R_2\leq & n_2+m_2+\mathrm{rank}\bigg[\begin{matrix}H_{11} & H_{12} \\ H_{21} & 0_{n_2\times m_2}\end{matrix}\bigg]\nonumber \\
&-\mathrm{rank}(H_{21})-\mathrm{rank}(H_{12})\label{eq:7}
\end{align}
\end{theorem}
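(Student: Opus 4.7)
The plan is to prove Theorem \ref{tho:main} in two directions: a converse establishing that \eqref{eq:1}--\eqref{eq:7} are outer bounds on any achievable $(R_1,R_2)$, and an achievability argument exhibiting a linear-precoding Han--Kobayashi scheme whose rates sweep out this region. Because the channel is deterministic and the operations are linear over $\mathbb{R}$, both sides will ultimately reduce to dimension/rank counting, so the work lies in choosing the right genies on the converse side and the right subspaces for the precoders on the achievability side.

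For the converse, \eqref{eq:1}--\eqref{eq:2} are just the single-user cut-set bounds since $H_{ii}\mathbf{x}_i$ occupies at most $\mathrm{rank}(H_{ii})$ dimensions. The remaining bounds I would obtain by genie-aided Fano-style arguments in the style of \cite{Gamal82}. Since $H(\mathbf{y}_i\mid\text{genie})$ for a linear deterministic channel equals the rank of the appropriately stacked matrix, each bound corresponds to a specific genie. For \eqref{eq:3} I would provide receiver 1 with the component of $\mathbf{x}_2$ lying in $\mathcal{N}(H_{12})$ (dimension $m_2-\mathrm{rank}(H_{12})$); for \eqref{eq:4} the symmetric genie. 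For the mutual-information-type bound \eqref{eq:5}, I would supply receiver 1 with $H_{21}\mathbf{x}_1$ and receiver 2 with $H_{12}\mathbf{x}_2$, so that the observable at $t_1$ becomes the stacked matrix $\bigl[\begin{smallmatrix}H_{11}&H_{12}\\ H_{21}&0\end{smallmatrix}\bigr]$ acting on $(\mathbf{x}_1,\mathbf{x}_2)$ and similarly at $t_2$; the $-\mathrm{rank}(H_{21})-\mathrm{rank}(H_{12})$ correction is exactly the double-counting of the images of the interference subspaces that the genie adds. Bounds \eqref{eq:6}--\eqref{eq:7} I would derive by augmenting the genie of \eqref{eq:5} with an extra noiseless copy of $\mathbf{x}_1$ (resp. $\mathbf{x}_2$) at one receiver, which is the standard route to weighted sum-rate bounds in Han--Kobayashi converses.

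For achievability I would use the linear Han--Kobayashi structure
\begin{equation*}
\mathbf{x}_i \;=\; V_{i,c}\,\mathbf{u}_{i,c} \;+\; V_{i,p}\,\mathbf{u}_{i,p},\qquad i\in\{1,2\},
\end{equation*}
where $\mathbf{u}_{i,c}\in\mathbb{R}^{d_{i,c}}$ is the common stream (decoded at both receivers) and $\mathbf{u}_{i,p}\in\mathbb{R}^{d_{i,p}}$ is the private stream (decoded only at $t_i$). The key structural choice is $\mathcal{R}(V_{i,p})\subseteq\mathcal{N}(H_{ji})$, so that private symbols are zero-forced at the unintended receiver; this bounds $d_{i,p}\le m_i-\mathrm{rank}(H_{ji})$. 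The common precoders $V_{i,c}$ I would take as random (generic) so that their images are in general position relative to the column spaces of the channel matrices. Decodability at $t_i$ then reduces to requiring that $[H_{ii}V_{i,c}\ \ H_{ii}V_{i,p}\ \ H_{ij}V_{j,c}]$ have full column rank, which, under genericity of $V_{i,c}$, becomes a purely dimensional constraint of the form $d_{i,c}+d_{i,p}+d_{j,c}\le n_i$, etc. Feeding the full list of such constraints into Fourier--Motzkin elimination of $(d_{1,c},d_{1,p},d_{2,c},d_{2,p})$ should yield precisely \eqref{eq:1}--\eqref{eq:7}, after which time sharing fills in the region. I would first verify this at the corner points of the polytope and then interpolate.

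The main obstacle I expect is matching converse and achievability at the bounds \eqref{eq:5}--\eqref{eq:7}. On the converse side, the delicate step is choosing the genie so that the residual entropy collapses to the stacked-matrix rank and so that the $-\mathrm{rank}(H_{21})-\mathrm{rank}(H_{12})$ correction appears cleanly without leftover slack; this requires careful accounting of the overlap between $\mathcal{R}(H_{ij})$ and the image of the stacked map. On the achievability side, the corresponding hurdle is showing that generic common precoders combined with null-space-constrained private precoders can simultaneously meet the decodability conditions at both receivers despite possible correlations and rank deficiencies of the $H_{ij}$'s; this amounts to a generic-position argument that must be stated carefully since the paper's assumptions \eqref{eq:asm1}--\eqref{eq:asm4} are the only structural hypotheses available. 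Once those two pieces are in hand, stitching together the remaining bounds is essentially bookkeeping.
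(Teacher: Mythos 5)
Your overall architecture coincides with the paper's: a rank/entropy-counting genie converse with the same genies for \eqref{eq:3}--\eqref{eq:5}, and achievability by Han--Kobayashi splitting with private streams zero-forced into $\mathcal{N}(H_{ji})$, generic common precoders, and Fourier--Motzkin elimination (your null-space formulation is essentially the coordinate-free version of the paper's SVD construction). However, two steps as you state them would not go through. On achievability, the assertion that full column rank of $[\begin{matrix}H_{ii}V_{i,c} & H_{ii}V_{i,p} & H_{ij}V_{j,c}\end{matrix}]$ ``becomes a purely dimensional constraint of the form $d_{i,c}+d_{i,p}+d_{j,c}\le n_i$'' under genericity is exactly what breaks down for correlated or rank-deficient channels: random precoders cannot create rank that the channel matrices do not support, so the correct sufficient conditions are the full family of subset-rank constraints --- each $d$, each pairwise sum, and the triple sum bounded by the rank of the corresponding concatenation of channel blocks (the paper's Lemma \ref{lem:MutliplyRank3}) --- and these ranks must then be evaluated explicitly in terms of the theorem's quantities using assumptions \eqref{eq:asm1}--\eqref{eq:asm4}, as in \eqref{eq:cp11}--\eqref{eq:cp17} (e.g.\ $d_{1,c}+d_{1,p}\le\mathrm{rank}(H_{11})$ and $d_{1,c}+d_{2,c}\le r_{12}+\mathrm{rank}(U_{10}^TH_{11}V_{21})$). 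With only the $\le n_i$ constraints plus the zero-forcing dimension bounds, Fourier--Motzkin would output a region strictly larger than \eqref{eq:1}--\eqref{eq:7} in degenerate cases, so the inner bound claim would be false; this rank-evaluation step is where most of the real work of the achievability proof lies.

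On the converse, your proposed genie for \eqref{eq:6}--\eqref{eq:7}, ``an extra noiseless copy of $\mathbf{x}_1$ at one receiver,'' does not produce the bound: giving $\mathbf{x}_1$ to $t_2$ only lets it cancel all interference (yielding bounds of the type $R_2\le r_{22}$), and giving it to $t_1$ is vacuous; neither generates the $m_1$ term nor the composite rank appearing in \eqref{eq:6}. The argument that works counts $H(\mathbf{y}_1)$ twice: one copy is bounded by $n_1$, the other through $H(\mathbf{y}_1)\le H(\mathbf{y}_1,\mathbf{z}_1,\mathbf{z}_2)$, so that after cancellation one is left with $H(\mathbf{y}_1)+H(\mathbf{y}_1\mid\mathbf{z}_1,\mathbf{z}_2)+H(\mathbf{y}_2\mid\mathbf{z}_2)$, where $\mathbf{z}_1=H_{21}\mathbf{x}_1$ and $\mathbf{z}_2=H_{12}\mathbf{x}_2$; each conditional term is then bounded via $H(A\mathbf{x}\mid B\mathbf{x})\le\mathrm{rank}\bigl[\begin{smallmatrix}A\\ B\end{smallmatrix}\bigr]-\mathrm{rank}(B)$ (the paper's Lemma \ref{lem:entropyMatrix}), and the $m_1$ term emerges from assumption \eqref{eq:asm3}. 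You will need that same lemma, or an equivalent, to make rigorous your claim in the \eqref{eq:5} bound that the residual entropy ``collapses to the stacked-matrix rank.''
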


Before proceeding to the proof, we give an alternative presentation of the region specified in \eqref{eq:1}-\eqref{eq:7}.
Firstly, denote $\mathrm{rank}(H_{ij})$ by $r_{ij}$, $i,j\in\{1,2\}$ and let the singular value decomposition (SVD) of $H_{12}$, $H_{21}$ be expressed as $H_{12}=U_1\Lambda_1V_1^T$ and $H_{21}=U_2\Lambda_2V_2^T$, where $U_1,V_1,U_2$ and $V_2$ are $n_1\times n_1$, $m_2\times m_2$, $n_2\times n_2$ and $m_1\times m_1$ orthogonal matrices, respectively; $\Lambda_1,\Lambda_2$ are $n_1\times m_2$ and $n_2\times m_1$ diagonal matrices with  singular values of $H_{12}, H_{21}$ on the main diagonal. Furthermore, $\Lambda_1$, $U_1$ and $V_1$ can be decomposed as follows:
\begin{itemize}
\item{$\Lambda_1=\left[\begin{matrix}D_{12} & 0_{r_{12}\times(m_2-r_{12})}\\ 0_{(n_1-r_{12})\times r_{12}} & 0_{(n_1-r_{12})\times(m_2-r_{12})}\end{matrix}\right]$, where $D_{12}\in\mathbb{R}^{r_{12}\times r_{12}}$ is a diagonal matrix with the non-zero singular values of $H_{12}$ on its main diagonal.}
\item{$U_1=\left[\begin{matrix}U_{11} & U_{10}\end{matrix}\right]$, where $U_{11}\in\mathbb{R}^{n_1\times r_{12}}$ whose columns form an orthonormal basis for the subspace spanned by columns of $H_{12}$, i.e. $\mathcal{R}(U_{11})=\mathcal{R}(H_{12})$; $U_{10}\in\mathbb{R}^{n_1\times(n_1-r_{12})}$ spans the null space of $H_{12}^{T}$, i.e $\mathcal{R}(U_{10})=\mathcal{N}(H_{12}^T)$.}
\item{$V_1=\left[\begin{matrix}V_{11} & V_{10}\end{matrix}\right]$, where $V_{11}\in\mathbb{R}^{m_2\times r_{12}}$ whose columns form an orthonormal basis for the subspace spanned by rows of $H_{12}$, i.e. $\mathcal{R}(V_{11})=\mathcal{R}(H_{12}^T)$; $V_{10}\in\mathbb{R}^{m_2\times(m_2-r_{12})}$ spans the null space of $H_{12}$, i.e. $\mathcal{R}(V_{10})=\mathcal{N}(H_{12})$.}
\end{itemize}
Similarly, $\Lambda_2,  U_2$ and $V_2$ can be decomposed as $\Lambda_2=\left[\begin{matrix}D_{21} & 0_{r_{21}\times(m_1-r_{21})}\\ 0_{(n_2-r_{21})\times r_{21}} & 0_{(n_2-r_{21})\times(m_1-r_{21})}\end{matrix}\right]$, $U_2=\left[\begin{matrix}U_{21} & U_{20}\end{matrix}\right]$, $V_2=\left[\begin{matrix}V_{21} & V_{20}\end{matrix}\right]$.\\

\begin{lemma}\label{lem:complexityReduce1}
$\mathrm{rank}\left[\begin{matrix}H_{11} & H_{12} \\ H_{21} & 0_{n_2\times m_2}\end{matrix}\right]=\mathrm{rank}(U_{10}^{T}H_{11}V_{20})+r_{21}+r_{12}$
\end{lemma}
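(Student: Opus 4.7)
The plan is to reduce the block matrix to one with three non-overlapping nonzero sub-blocks whose ranks add, using only rank-preserving operations built out of the SVDs of $H_{12}$ and $H_{21}$.

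First, I would multiply on the left by the block-orthogonal matrix $\mathrm{diag}(U_1^T,U_2^T)$ and on the right by $\mathrm{diag}(V_2,V_1)$. Both are orthogonal and hence invertible, so the rank is unchanged. The $(1,2)$ and $(2,1)$ blocks become $U_1^T H_{12} V_1 = \Lambda_1$ and $U_2^T H_{21} V_2 = \Lambda_2$ respectively, while the $(1,1)$ block becomes $U_1^T H_{11} V_2$. I would then split this transformed $H_{11}$ using the column partitions $U_1=[U_{11}\ U_{10}]$ and $V_2=[V_{21}\ V_{20}]$ into a $2\times 2$ grid with blocks
$A=U_{11}^T H_{11} V_{21}$, $B=U_{11}^T H_{11} V_{20}$, $C=U_{10}^T H_{11} V_{21}$, $D=U_{10}^T H_{11} V_{20}$. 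Refining $\Lambda_1$ and $\Lambda_2$ along the same partitions exposes the $r_{12}\times r_{12}$ invertible diagonal block $D_{12}$ (sitting in the ``$U_{11}$-row / $V_{11}$-column'' corner) and the $r_{21}\times r_{21}$ invertible diagonal block $D_{21}$ (sitting in the ``$U_{21}$-row / $V_{21}$-column'' corner); every other block of $\Lambda_1,\Lambda_2$ is zero.

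Next I would do block Gaussian elimination. Because the $D_{12}$ column strip has its only nonzero entries in the $U_{11}$-row strip, right-multiplying by an elementary block matrix $I + (\text{column of }D_{12})\cdot(-D_{12}^{-1}[A\ B])$ clears $A$ and $B$ without disturbing any other block; in particular $C$ and $D$ are untouched because they live in the $U_{10}$-row strip, and $D_{21}$ is untouched because $D_{12}$'s column strip is zero in rows 3 and 4. Symmetrically, left-multiplying by a block elementary matrix using the $D_{21}$ row strip clears $C$ (since the $D_{21}$ row strip is zero outside column strip $a$, this also does not disturb $D$, $D_{12}$, or the already-cleared $A,B$).

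After these two eliminations the matrix contains exactly three nonzero blocks: $D_{12}$, $D_{21}$, and $D=U_{10}^T H_{11} V_{20}$. Crucially, their row strips are pairwise disjoint and their column strips are pairwise disjoint, so the rank of the whole matrix is simply the sum $\mathrm{rank}(D_{12})+\mathrm{rank}(D_{21})+\mathrm{rank}(D)=r_{12}+r_{21}+\mathrm{rank}(U_{10}^T H_{11} V_{20})$, which is the claim.

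I do not expect a serious obstacle here; the only thing to be careful about is bookkeeping. Specifically, one must verify that the elimination steps genuinely do not interact, which follows from the block-zero pattern of the $\Lambda_i$ after the SVD alignment, and one must track the four row strip sizes ($r_{12}$, $n_1-r_{12}$, $r_{21}$, $n_2-r_{21}$) and four column strip sizes ($r_{21}$, $m_1-r_{21}$, $r_{12}$, $m_2-r_{12}$) consistently so that $D$ indeed ends up being $U_{10}^T H_{11} V_{20}$ and not, say, its transpose or a wrong sub-block.
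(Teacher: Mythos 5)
Your proposal is correct and follows essentially the same route as the paper: conjugate by $\mathrm{diag}(U_1^T,U_2^T)$ and $\mathrm{diag}(V_2,V_1)$, expose the $D_{12}$ and $D_{21}$ blocks, and use them as pivots in block Gaussian elimination to clear the remaining $U_1^T H_{11} V_2$ entries in their row/column strips, leaving three block-disjoint nonzero blocks whose ranks add. Your extra care in verifying that the two eliminations do not interact is exactly the content the paper compresses into ``(a) follows by elementary row and elementary column operations.''
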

\begin{proof}
Since $U_1,U_2,V_1,V_2$ are non-singular, $\left[\begin{matrix}U_1^T & 0\\ 0 & U_2^T\end{matrix}\right]$ and $\left[\begin{matrix}V_2 & 0\\ 0 & V_1\end{matrix}\right]$ are non-singular as well. By using the fact that the matrix rank is unchanged with a left or right multiplication by a non-singular matrix, we have
\begin{equation*}
\begin{aligned}
&\mathrm{rank}\left[\begin{matrix}H_{11} & H_{12} \\ H_{21} & 0_{n_2\times m_2}\end{matrix}\right]\\
&=\mathrm{rank}\left(\left[\begin{matrix}U_1^T & 0\\ 0 & U_2^T\end{matrix}\right]
\left[\begin{matrix}H_{11} & H_{12} \\ H_{21} & 0\end{matrix}\right]
\left[\begin{matrix}V_2 & 0\\ 0 & V_1\end{matrix}\right]\right) \\
&=\mathrm{rank}\left[\begin{matrix}U_{1}^{T}H_{11}V_2 & \Lambda_1 \\ \Lambda_2 & 0\end{matrix}\right]\\
&=\mathrm{rank}\left[\begin{matrix} U_{11}^{T}H_{11}V_{21} & U_{11}^{T}H_{11}V_{20} & D_{12} & 0\\
U_{10}^{T}H_{11}V_{21} & U_{10}^{T}H_{11}V_{20} & 0 & 0 \\
D_{21} & 0 & 0 & 0\\
0 & 0 & 0 & 0
\end{matrix}\right]\\
&\overset{(a)}{=}\mathrm{rank}\left[\begin{matrix} 0 & 0 & D_{12} \\
0 & U_{10}^{T}H_{11}V_{20} & 0\\
D_{21} & 0 & 0
\end{matrix}\right]\\
&=\mathrm{rank}(U_{10}^{T}H_{11}V_{20})+r_{21}+r_{12}
\end{aligned}
\end{equation*}
where (a) follows by elementary row and elementary column operations with the fact that $D_{12}$ and $D_{21}$ are non-singular.
\end{proof}
Following similar arguments as in Lemma \ref{lem:complexityReduce1}, we have
\begin{lemma}\label{lem:complexityReduce2}
$\mathrm{rank}\left[\begin{matrix}H_{21} & H_{22} \\ 0_{n_1\times m_1} & H_{12}\end{matrix}\right]=\mathrm{rank}(U_{20}^{T}H_{22}V_{10})+r_{21}+r_{12}$
\end{lemma}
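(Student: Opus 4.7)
The plan is to mirror the proof of Lemma \ref{lem:complexityReduce1} almost verbatim, with only the block positions of the zero entry changed. Since $H_{12}=U_1\Lambda_1 V_1^T$ and $H_{21}=U_2\Lambda_2 V_2^T$ with $U_1,V_1,U_2,V_2$ orthogonal and therefore non-singular, the block-diagonal matrices $\mathrm{diag}(U_2^T,U_1^T)$ and $\mathrm{diag}(V_2,V_1)$ are non-singular. Left- and right-multiplying the matrix $\left[\begin{smallmatrix} H_{21} & H_{22} \\ 0 & H_{12}\end{smallmatrix}\right]$ by them preserves rank, and reduces the $(1,1)$ block to $\Lambda_2$, the $(2,2)$ block to $\Lambda_1$, and the $(1,2)$ block to $U_2^T H_{22} V_1$, while keeping the $(2,1)$ block zero.

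Next I would expand the remaining blocks according to the decompositions $U_i=[U_{i1}\; U_{i0}]$ and $V_i=[V_{i1}\; V_{i0}]$, so that $\Lambda_2$ and $\Lambda_1$ become $2\times 2$ block matrices whose only non-zero entries are the non-singular diagonal matrices $D_{21}$ and $D_{12}$, while $U_2^T H_{22} V_1$ splits into the four blocks $U_{2j}^T H_{22} V_{1k}$ for $j,k\in\{0,1\}$. This produces a $4\times 4$ block matrix in which $D_{21}$ sits in block position $(1,1)$ and $D_{12}$ in block position $(3,3)$, with the $U_{20}^T H_{22} V_{10}$ block isolated in position $(2,4)$.

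Using $D_{21}$ as a pivot in its block row and column, I would clear the blocks $U_{21}^T H_{22} V_{11}$ and $U_{21}^T H_{22} V_{10}$ by elementary column operations; similarly, using $D_{12}$ as a pivot, I would clear $U_{20}^T H_{22} V_{11}$ via elementary row operations. This is precisely step (a) of the proof of Lemma \ref{lem:complexityReduce1}. The resulting matrix is block-diagonal-like with exactly three surviving non-zero blocks: $D_{21}$, $D_{12}$, and $U_{20}^T H_{22} V_{10}$, whose ranks add to give $r_{21}+r_{12}+\mathrm{rank}(U_{20}^T H_{22} V_{10})$.

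There is no real obstacle; the proof is essentially bookkeeping, and the hardest part is just tracking the block positions correctly so that the pivots $D_{21}$ and $D_{12}$ end up annihilating the correct off-diagonal blocks. Since the argument is mechanically the same as Lemma \ref{lem:complexityReduce1}, it is natural (as the authors do) to omit the details and simply cite the earlier calculation.
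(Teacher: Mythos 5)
Your proposal is correct and is exactly the argument the paper intends: it transplants the proof of Lemma \ref{lem:complexityReduce1}, conjugating by the non-singular block matrices built from $U_2,U_1,V_2,V_1$ and then using the non-singular pivots $D_{21}$ and $D_{12}$ to isolate $U_{20}^T H_{22} V_{10}$, which is precisely what the paper summarizes as ``following similar arguments.'' No gaps; the block bookkeeping (pivot positions $(1,1)$ and $(3,3)$, surviving block $(2,4)$) is right.
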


With the above results,  \eqref{eq:1}-\eqref{eq:7} can be equivalently expressed as
\begin{align}
R_1\leq &r_{11} \label{eq:m1}\\
R_2\leq &r_{22} \label{eq:m2}\\
R_1+R_2\leq &n_1+m_2-r_{12}\label{eq:m3}\\
R_1+R_2\leq &n_2+m_1-r_{21}\label{eq:m4}\\
R_1+R_2\leq &\mathrm{rank}(U_{10}^{T}H_{11}V_{20})+\mathrm{rank}(U_{20}^{T}H_{22}V_{10})\nonumber\\
  &+r_{21}+r_{12}\label{eq:m5}\\
2R_1+R_2\leq &n_1+m_1+\mathrm{rank}(U_{20}^{T}H_{22}V_{10})\label{eq:m6}\\
R_1+2R_2\leq &n_2+m_2+\mathrm{rank}(U_{10}^{T}H_{11}V_{20})\label{eq:m7}
\end{align}

\subsection{Achievability}
This subsection gives the achievability proof of the capacity region given in Theorem~\ref{tho:main}. Given a rate pair $(R_1,R_2)$ that satisfies the inequalities in \eqref{eq:m1}-\eqref{eq:m7}, we show that it is achievable by using linear precoding together with a specific type of Han-Kobayahsi  rate-splitting.

Firstly, with the SVD expressions for $H_{12}$ and $H_{21}$ given previously, and by absorbing the orthogonal matrices $V_i^{T}$ into the input vector $\mathbf{x}_j$ and multiplying the output vector $\mathbf{y}_i$ with $U_{i}^T$, $i,j\in\{1,2\},i\neq j$, the channel model given in \eqref{eq:DIC} can be equivalently written as
\begin{align}
\mathbf{y}_1'&=H_{11}'\mathbf{x}_1'+\Lambda_1\mathbf{x}_2'\label{eq:mch1}\\
\mathbf{y}_2'&=H_{22}'\mathbf{x}_2'+\Lambda_2\mathbf{x}_1'\label{eq:mch2}
\end{align}
where $\mathbf{y}_1'=U_1^{T}\mathbf{y}_1$, $\mathbf{y}_2'=U_2^{T}\mathbf{y}_2$, $\mathbf{x}_1'=V_2^{T}\mathbf{x}_1$, $\mathbf{x}_2'=V_1^{T}\mathbf{x}_2$, $H_{11}'=U_1^{T}H_{11}V_2$ and $H_{22}'=U_2^{T}H_{22}V_1$. 
The advantage of this equivalent channel model is that it results in diagonal interfering channel matrices, which is easier to deal with. Similar transformations have been used in \cite{Jafar07}. To find the input signal vectors $\mathbf x_1$ and $\mathbf x_2$, it is sufficient to determine $\mathbf x_1'$ and $\mathbf x_2'$ since they are related by the nonsingular transformations given by $\mathbf x_1=V_2 \mathbf x_1'$ and $\mathbf x_2 = V_1 \mathbf x_2'$.

Denote by $\mathbf{d}_1$ and $\mathbf{d}_2$ the information-bearing symbols to be sent by $s_1$ and $s_2$, respectively, where $\mathbf{d}_1\in\mathbb{R}^{R_1}$ and  $\mathbf{d}_2\in\mathbb{R}^{R_2}$. Motivated by the rate-splitting technique used in the celebrated Han-Kobayashi schemes, we divide the $R_1$ symbols in $\mathbf{d}_1$ into two parts: the common information $\mathbf{d}_{1c}\in\mathbb{R}^{R_{1c}}$, which is decodable at both $t_1$ and $t_2$,  and the private information $\mathbf{d}_{1p}\in\mathbb{R}^{R_{1p}}$, which is decodable at $t_1$ only. Then we have $\mathbf{d}_1=\left[\begin{matrix}\mathbf{d}_{1c}\\ \mathbf{d}_{1p}\end{matrix}\right]$ and $R_1=R_{1c}+R_{1p}$. To map $\mathbf d_1$ to the $m_1$-dimensional input vector $\mathbf x_1'$,  random spreading is applied. Specifically, let $E_{1c}\in\mathbb{R}^{r_{21}\times{R_{1c}}}$, $E_{1p}\in\mathbb{R}^{(m_1-r_{21})\times R_{1p}}$ be randomly and independently generated matrices, then the transmitted signal vector in the channel model \eqref{eq:mch1} is given by $\mathbf{x}_1'=\left[\begin{matrix}E_{1c}\mathbf{d}_{1c} \\ E_{1p}\mathbf{d}_{1p}\end{matrix}\right]$. Note that $\mathbf{x}_1'$ is obtained by effectively precoding the information-bearing symbols $\mathbf d_1$ with a \emph{block-diagonal} matrix with $E_{1c}$ and $E_{1p}$ on the block diagonals. The effect is that the common and private symbols $\mathbf d_{1c}$ and $\mathbf d_{1p}$ are constrained to the first $r_{21}$ and last $m_1-r_{21}$ components of $\mathbf{x}_1'$, respectively. With such a rate-splitting and precoding, later we will show that the private symbols $\mathbf d_{1p}$ will not affect the received signal vector $\mathbf y_2'$ at $t_2$, i.e., the inter-user interference caused by the private symbols is zero-forced; and the common symbols $\mathbf d_{1c}$ can be decoded at both $t_1$ and $t_2$ if the constraints given in \eqref{eq:m1}-\eqref{eq:m7} are satisfied. 

Similar transmission scheme can be applied at $s_2$, i.e., the information-bearing symbols $\mathbf{d}_2\in \mathbb{R}^{R_2}$ are  split into $\mathbf{d}_{2c}\in \mathbb{R}^{R_{2c}}$ and $\mathbf{d}_{2p}\in \mathbb{R}^{R_{2p}}$, and $\mathbf{x}_2'=\left[\begin{matrix}E_{2c}\mathbf{d}_{2c} \\ E_{2p}\mathbf{d}_{2p}\end{matrix}\right]$, where  $E_{2c} \in \mathbb{R}^{r_{12}\times R_{2c}}$ and $E_{2p} \in \mathbb{R}^{(m_2-r_{12})\times R_{2p}}$ are randomly and independently generated matrices.

The  channel $H_{11}'$ in \eqref{eq:mch1} can be expressed as
\begin{equation*}
\begin{aligned}
H_{11}'
&=U_1^{T}H_{11}V_2=\left[\begin{matrix}U_{11}^T \\ U_{10}^{T}\end{matrix}\right]H_{11}[\begin{matrix}V_{21} & V_{20}\end{matrix}] \\
&=\left[\begin{matrix}U_{11}^TH_{11}V_{21} & U_{11}^TH_{11}V_{20} \\ U_{10}^TH_{11}V_{21} & U_{10}^TH_{11}V_{20}\end{matrix}\right]
\end{aligned}
\end{equation*}
Therefore, the output at $t_1$ given in \eqref{eq:mch1} can be written as
\begin{align}
\mathbf{y}_1'
=&\left[\begin{matrix}U_{11}^TH_{11}V_{21} & U_{11}^TH_{11}V_{20} \\ U_{10}^TH_{11}V_{21} & U_{10}^TH_{11}V_{20}\end{matrix}\right]
\left[\begin{matrix}E_{1c}\mathbf{d}_{1c} \\ E_{1p}\mathbf{d}_{1p}\end{matrix}\right] \nonumber \\
&+\left[\begin{matrix}D_{12} & 0_{r_{12}\times(m_2-r_{12})}\\ 0_{(n_1-r_{12})\times r_{12}} & 0_{(n_1-r_{12})\times(m_2-r_{12})}\end{matrix}\right]
\left[\begin{matrix}E_{2c}\mathbf{d}_{2c} \\ E_{2p}\mathbf{d}_{2p}\end{matrix}\right]\nonumber\\
=&\left[\begin{matrix}U_{11}^TH_{11}V_{21} & U_{11}^TH_{11}V_{20} & D_{12}\\ U_{10}^TH_{11}V_{21} & U_{10}^TH_{11}V_{20} & 0\end{matrix}\right]
\left[\begin{matrix}E_{1c}\mathbf{d}_{1c}\\E_{1p}\mathbf{d}_{1p}\\E_{2c}\mathbf{d}_{2c}\end{matrix}\right] \label{eq:y1Prime}
\end{align}
\eqref{eq:y1Prime} clearly shows that the private symbol vector $\mathbf d_{2p}$ transmitted by $s_2$ does not affect $\mathbf{y}_1'$ due to the block-diagonal precoding discussed previously. Although $t_1$ is interested in recovering $\mathbf{d}_{1c}$ and $\mathbf{d}_{1p}$ only, \eqref{eq:y1Prime} shows that decoding the common symbols $\mathbf d_{2c}$ is also necessary since the decoding process is equivalent to solving a system of linear equations with unknowns $\mathbf{d}_{1c}$, $\mathbf{d}_{1p}$ and $\mathbf d_{2c}$.


Similar arguments hold for $t_2$ as well, where we have 
\begin{align}
\mathbf{y}_2'
=&\left[\begin{matrix}U_{21}^TH_{22}V_{11} & U_{21}^TH_{22}V_{10} \\ U_{20}^TH_{22}V_{11} & U_{20}^TH_{22}V_{10}\end{matrix}\right]
\left[\begin{matrix}E_{2c}\mathbf{d}_{2c} \\ E_{2p}\mathbf{d}_{2p}\end{matrix}\right]\nonumber \\
&+\left[\begin{matrix}D_{21} & 0_{r_{21}\times(m_1-r_{21})}\\ 0_{(n_2-r_{21})\times r_{21}} & 0_{(n_2-r_{21})\times(m_1-r_{21})}\end{matrix}\right]
\left[\begin{matrix}E_{1c}\mathbf{d}_{1c} \\ E_{1p}\mathbf{d}_{1p}\end{matrix}\right]\nonumber\\
=&\left[\begin{matrix}U_{21}^TH_{22}V_{11} & U_{21}^TH_{22}V_{10} & D_{21}\\ U_{20}^TH_{22}V_{11} & U_{20}^TH_{22}V_{10} & 0\end{matrix}\right]\left[\begin{matrix}E_{2c}\mathbf{d}_{2c}\\E_{2p}\mathbf{d}_{2p}\\E_{1c}\mathbf{d}_{1c}\end{matrix}\right] \label{eq:y2Prime}
\end{align}


To find a sufficient condition such that the system of linear equations given by \eqref{eq:y1Prime} and \eqref{eq:y2Prime} are uniquely solvable, the following results are shown to be useful:

\begin{lemma}\cite{Bernstein08}\label{lem:uniqueDec}
Given the relationship $\mathbf{y}=A\mathbf{x}$, where $A\in\mathbb{R}^{p\times l}$, $\mathbf{x}\in\mathbb{R}^{l}$ and $\mathbf y \in \mathbb{R}^{p}$, 
then $\mathbf{x}$ can be uniquely determined from $\mathbf{y}$ if $A$ is of full column rank, i.e., $\mathrm{rank}(A)=l$.
\end{lemma}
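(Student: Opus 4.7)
The plan is to dispatch this as the standard linear-algebra fact that a full column rank matrix defines an injective linear map. So the whole content of the lemma is that $\mathbf{x}\mapsto A\mathbf{x}$ is one-to-one, and hence $\mathbf{y}$ determines $\mathbf{x}$ uniquely whenever a solution exists.

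First I would prove uniqueness by the usual subtraction argument. Suppose $\mathbf{x}_1,\mathbf{x}_2\in\mathbb{R}^l$ both satisfy $A\mathbf{x}_i=\mathbf{y}$. By linearity, $A(\mathbf{x}_1-\mathbf{x}_2)=\mathbf{0}$, so it suffices to show $\mathcal{N}(A)=\{\mathbf{0}\}$. Invoking the rank-nullity theorem, $\mathrm{rank}(A)+\dim\mathcal{N}(A)=l$, and the hypothesis $\mathrm{rank}(A)=l$ forces $\dim\mathcal{N}(A)=0$. Therefore $\mathbf{x}_1=\mathbf{x}_2$, which is exactly the uniqueness claim.

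As an alternative I would mention the constructive route via a left inverse: full column rank implies $A^TA\in\mathbb{R}^{l\times l}$ is symmetric positive definite and hence invertible, so multiplying $\mathbf{y}=A\mathbf{x}$ on the left by $(A^TA)^{-1}A^T$ yields the explicit formula $\mathbf{x}=(A^TA)^{-1}A^T\mathbf{y}$. This version is arguably preferable because it not only settles uniqueness but also produces a concrete linear decoder, which is precisely what the achievability argument needs when it recovers $\mathbf{d}_{1c},\mathbf{d}_{1p},\mathbf{d}_{2c}$ from $\mathbf{y}_1'$ in \eqref{eq:y1Prime} and the analogous quantities from \eqref{eq:y2Prime}.

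The proof itself has no real obstacle since it is a textbook fact (which is why the paper only cites \cite{Bernstein08} rather than reproves it). The substantive work is downstream: one must then verify that the specific coefficient matrices appearing in \eqref{eq:y1Prime} and \eqref{eq:y2Prime}, built from the blocks of $H_{11}'$, $H_{22}'$, $D_{12}$, $D_{21}$ together with the random spreading matrices $E_{1c},E_{1p},E_{2c},E_{2p}$, have full column rank with high probability under the rate constraints \eqref{eq:m1}--\eqref{eq:m7}. That genericity/dimension-counting step is where the real content of the achievability argument lies; the present lemma is merely the bridge that turns such a rank condition into a successful decoding guarantee.
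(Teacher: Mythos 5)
Your proof is correct: the rank--nullity (injectivity) argument, or equivalently the explicit left inverse $(A^TA)^{-1}A^T$, is exactly the standard justification of this fact, which the paper itself does not reprove but simply cites from \cite{Bernstein08}. Nothing further is needed; as you note, the real work in the paper lies in verifying the full-column-rank condition for $M_1$ (and its counterpart at $t_2$) via Lemma~\ref{lem:MutliplyRank3}, not in this lemma.
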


To simplify the presentation, let
\begin{align}
M_1=\left[\begin{matrix}U_{11}^TH_{11}V_{21} & U_{11}^TH_{11}V_{20} & D_{12}\\ U_{10}^TH_{11}V_{21} & U_{10}^TH_{11}V_{20} & 0\end{matrix}\right]
\left[\begin{matrix}E_{1c}&0&0\\0&E_{1p}&0\\0&0&E_{2c}\end{matrix}\right]\label{eq:TransM1}
\end{align}

Hence, \eqref{eq:y1Prime} can be written as $\mathbf{y}_1'=M_1\left[\begin{matrix}\mathbf{d}_{1c}^{T}& \mathbf{d}_{1p}^{T} & \mathbf{d}_{2c}^{T}\end{matrix}\right]^T$. According to Lemma \ref{lem:uniqueDec}, the receiver $t_1$ can successfully decode $\mathbf{d}_{1c}$, $\mathbf{d}_{1p}$ and $\mathbf{d}_{2c}$ if $M_1$ is of full column rank. Next, we find a sufficient condition over the data rates $R_{1c}, R_{1p}$ and $R_{2c}$ such that $M_1$ has full column rank.

\begin{lemma}\label{lem:MutliplyRank3}
Given  $A_1\in\mathbb{R}^{p\times l_1}$, $A_2\in\mathbb{R}^{p\times l_2}$ and $A_3\in\mathbb{R}^{p\times l_3}$, and let $E_1\in \mathbb{R}^{l_1\times k_1}$, $E_2 \in \mathbb{R}^{l_2\times k_2}$ and $E_3 \in \mathbb{R}^{l_3\times k_3}$ be  randomly and independently generated,  then the full column rank condition $\mathrm{rank}([\begin{matrix}A_1E_1 & A_2E_2 & A_3E_3\end{matrix}])=k_1+k_2+k_3$ holds with probability 1 if the following conditions are satisfied:
\begin{itemize}
\item{$k_1\leq\mathrm{rank}(A_1)$}
\item{$k_2\leq\mathrm{rank}(A_2)$}
\item{$k_3\leq\mathrm{rank}(A_3)$}
\item{$k_1+k_2\leq\mathrm{rank}([\begin{matrix}A_1 & A_2\end{matrix}])$}
\item{$k_1+k_3\leq\mathrm{rank}([\begin{matrix}A_1 & A_3\end{matrix}])$}
\item{$k_2+k_3\leq\mathrm{rank}([\begin{matrix}A_2 & A_3\end{matrix}])$}
\item{$k_1+k_2+k_3\leq\mathrm{rank}([\begin{matrix}A_1 & A_2 & A_3\end{matrix}])$}
\end{itemize}
\end{lemma}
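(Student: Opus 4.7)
My plan is to reduce the probabilistic full-column-rank claim to a deterministic existence question, and then settle the latter by a constructive geometric argument invoking the seven rank inequalities. The first step is the standard polynomial non-vanishing trick: every entry of $M := [A_1 E_1 \; A_2 E_2 \; A_3 E_3]$ is linear in the entries of the $E_i$'s, so every $(k_1+k_2+k_3)$-square minor of $M$ is a polynomial in those entries. Since the zero set of a nonzero real polynomial has Lebesgue measure zero, it suffices to exhibit one deterministic choice $(E_1^*, E_2^*, E_3^*)$ that makes some such minor nonzero, i.e., that renders $M$ of full column rank.

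Next, I would recast existence geometrically. The columns of $A_i E_i^*$ always lie in $\mathcal{R}(A_i)$, and conversely, for any $k_i$-dimensional subspace $W_i \subseteq \mathcal{R}(A_i)$ one can pick $E_i^*$ so that the columns of $A_i E_i^*$ form a basis of $W_i$. Hence the full-column-rank condition on $M$ is equivalent to the existence of subspaces $W_i \subseteq \mathcal{R}(A_i)$ with $\dim W_i = k_i$ whose sum is internally direct, i.e., $\dim(W_1 + W_2 + W_3) = k_1 + k_2 + k_3$.

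The remaining task is to show that such $W_1, W_2, W_3$ exist under the given hypotheses. The seven conditions are exactly $\sum_{i \in S} k_i \leq \mathrm{rank}([A_i]_{i \in S})$ for every nonempty $S \subseteq \{1,2,3\}$, and these are trivially necessary. For sufficiency I would proceed greedily: pick $W_1 \subseteq \mathcal{R}(A_1)$ generically, then $W_2 \subseteq \mathcal{R}(A_2)$ generically, then $W_3 \subseteq \mathcal{R}(A_3)$ generically. The key input at each step is the dimension formula for a generic intersection --- for a generic $k$-dimensional subspace $W$ of an ambient space $V$ and any fixed $U \subseteq V$, $\dim(W \cap U) = \max(0, k + \dim U - \dim V)$ --- which translates feasibility of each extension step into exactly the relevant rank inequality among the seven.

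The main obstacle is formalising this genericity without circularity, since the dimension formula itself rests on a smaller instance of the polynomial non-vanishing argument applied to a basis-completion construction. A cleaner alternative is to invoke the matroid union theorem of Edmonds/Nash--Williams, which in this setting asserts that the seven subset-rank inequalities are both necessary and sufficient for the existence of the desired $W_i$'s; feeding the resulting bases back through the geometric reformulation of step two then closes the argument.
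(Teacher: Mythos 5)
Your proposal is correct in substance but takes a genuinely different route from the paper's proof. The paper transfers the problem to a finite field: it counts subspaces with Gaussian binomial coefficients to show that $\mathrm{rank}(A_iE_i)$ and the successive intersection dimensions $\mathrm{dim}(\mathcal{R}(A_1E_1)\cap\mathcal{R}(A_3))$ and $\mathrm{dim}(\mathcal{R}(A_2E_2)\cap(\mathcal{R}(A_1E_1)+\mathcal{R}(A_3)))$ take their generic values with probability tending to $1$ as $q\to\infty$, and then a two-case analysis verifies that $(\mathcal{R}(A_1E_1)+\mathcal{R}(A_2E_2))\cap\mathcal{R}(A_3E_3)$ is trivial exactly under the seven rank conditions; the real case is obtained by the informal identification of $\mathbb{R}$ with $\mathbb{F}_q$, $q\to\infty$. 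You instead stay over $\mathbb{R}$, reduce ``probability 1'' to the existence of a single good deterministic choice via the observation that a maximal minor of $[\begin{matrix}A_1E_1 & A_2E_2 & A_3E_3\end{matrix}]$ is a polynomial in the entries of the $E_i$ whose zero set, if the polynomial is not identically zero, has Lebesgue measure zero, and you settle existence by a combinatorial theorem. This buys a cleaner treatment of the real case (the paper's $q\to\infty$ passage implicitly needs exactly your absolute-continuity argument) and replaces the paper's explicit generic-dimension computations and case analysis with a citation. Two points to tighten: first, the measure-zero step requires the entries of the $E_i$ to be drawn from a distribution absolutely continuous with respect to Lebesgue measure, which is what ``randomly generated'' must be taken to mean (the paper is equally implicit here); second, the deterministic existence statement you need is Rado's theorem on independent transversals, applied to the linear matroid on the columns of $[\begin{matrix}A_1 & A_2 & A_3\end{matrix}]$ with the family member $\mathcal{R}(A_i)$ (or the column set of $A_i$) repeated $k_i$ times, rather than the matroid union theorem taken at face value: independence in the union of the three column matroids does not by itself give joint linear independence in $\mathbb{R}^p$ (two parallel columns split between $M_1$ and $M_2$ are union-independent), although Rado's theorem is indeed derivable from matroid union/intersection. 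With that citation corrected the argument closes, and your worry about circularity in the greedy variant becomes moot since you do not need it; note that the greedy computation, made rigorous, is essentially what the paper carries out over $\mathbb{F}_q$.
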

\begin{proof}
Please refer to Appendix~\ref{A:fullrank}.
\end{proof}

Consider $M_1$ given in \eqref{eq:TransM1}, we have
\begin{align*}
M_1=\left[\begin{matrix}\left(\begin{matrix}U_{11}^TH_{11}V_{21}\\U_{10}^TH_{11}V_{21}\end{matrix}\right)E_{1c} & \left(\begin{matrix}U_{11}^TH_{11}V_{20}\\U_{10}^TH_{11}V_{20}\end{matrix}\right)E_{1p} & \left(\begin{matrix}D_{12}\\0\end{matrix}\right)E_{2c}\end{matrix}\right]
\end{align*}

Recall that $E_{1c}\in \mathbb{R}^{r_{21}\times R_{1c}}$, $E_{1p}\in \mathbb{R}^{(m_1-r_{21})\times R_{1p}}$ and $E_{2c}\in \mathbb{R}^{r_{12}\times R_{2c}}$. By directly applying Lemma~\ref{lem:MutliplyRank3}, a sufficient condition for $M_1$ to be of full column rank, and hence \eqref{eq:y1Prime} is uniquely solvable,  is given by

\begin{align}
R_{1c}&\leq \mathrm{rank}\left(\left[\begin{matrix}U_{11}^TH_{11}V_{21}\\ U_{10}^TH_{11}V_{21}\end{matrix}\right]\right)\nonumber\\
&=\mathrm{rank}(U_{1}^TH_{11}V_{21})\label{eq:cp11}\\
R_{1p}&\leq \mathrm{rank}\left(\left[\begin{matrix}U_{11}^TH_{11}V_{20}\\ U_{10}^TH_{11}V_{20}\end{matrix}\right]\right)\nonumber\\
&=\mathrm{rank}(U_{1}^{T}H_{11}V_{20})\overset{(a)}{=}m_1-r_{21}\label{eq:cp12}\\
R_{2c}&\leq \left(\left[\begin{matrix}D_{12}\\ 0_{(n_1-r_{12})\times r_{12}}\end{matrix}\right]\right)=r_{12}\label{eq:cp13}\\
R_{1c}+R_{1p}&\leq \mathrm{rank}\left(\left[\begin{matrix}U_{11}^TH_{11}V_{21} & U_{11}^TH_{11}V_{20}\\ U_{10}^TH_{11}V_{21} & U_{10}^TH_{11}V_{20}\end{matrix}\right]\right)\nonumber\\
&=\mathrm{rank}(U_{1}^{T}H_{11}V_{2})\overset{(b)}{=}r_{11}\label{eq:cp14}\\
R_{1c}+R_{2c}&\leq \mathrm{rank}\left(\left[\begin{matrix}U_{11}^TH_{11}V_{21} & D_{12}\\ U_{10}^TH_{11}V_{21} & 0_{(n_1-r_{12})\times r_{12}}\end{matrix}\right]\right)\nonumber\\
&\overset{(c)}{=}r_{12}+\mathrm{rank}(U_{10}^TH_{11}V_{21})\label{eq:cp15}\\
R_{1p}+R_{2c} &\leq \mathrm{rank}\left(\left[\begin{matrix}U_{11}^TH_{11}V_{20} & D_{12}\\ U_{10}^TH_{11}V_{20} & 0_{(n_1-r_{12})\times r_{12}}\end{matrix}\right]\right)\nonumber\\
&\overset{(d)}{=}r_{12}+\mathrm{rank}(U_{10}^TH_{11}V_{20})\label{eq:cp16}\\
R_{1c}+R_{1p}+R_{2c}&\leq \mathrm{rank}\left(\left[\begin{matrix}U_{11}^TH_{11}V_{21} & U_{11}^TH_{11}V_{20} & D_{12}\\ U_{10}^TH_{11}V_{21} & U_{10}^TH_{11}V_{20} & 0\end{matrix}\right]\right)\nonumber\\
&\overset{(e)}{=}n_1\label{eq:cp17}
\end{align}
where (a) follows from Lemma \ref{lem:2matrixMultiply} in Appendix~\ref{A:usefulLemma} as
\begin{equation*}
\begin{aligned}
&\mathrm{rank}(U_{1}^TH_{11}V_{20})\overset{(f)}{=}\mathrm{rank}(H_{11}V_{20})\\
&=\mathrm{rank}(V_{20})-\mathrm{dim}(\mathcal{N}(H_{11})\cap\mathcal{R}(V_{20}))\\
&=\mathrm{rank}(V_{20})-\mathrm{dim}(\mathcal{N}(H_{11})\cap\mathcal{N}(H_{21}))\\
&\overset{(g)}=\mathrm{rank}(V_{20})=m_1-r_{21}
\end{aligned}
\end{equation*}
where (f) holds since $U_{1}$ is full rank, and (g) follows the assumption given in \eqref{eq:asm3}.

Moreover, (b) is true since $U_{1}$ and $V_{2}$ are non-singular, (c) and (d) can be obtained by applying elementary column operations since $D_{12}$ is nonsingular, (e) can be shown with elementary column operations together with similar proof as that for (a), i.e.
\begin{align*}
&\mathrm{rank}\left(\left[\begin{matrix}U_{11}^TH_{11}V_{21} & U_{11}^TH_{11}V_{20} & D_{12}\\ U_{10}^TH_{11}V_{21} & U_{10}^TH_{11}V_{20} & 0\end{matrix}\right]\right)\\
&=\mathrm{rank}\left(\left[\begin{matrix} U_{10}^TH_{11}V_{21} & U_{10}^TH_{11}V_{20} \end{matrix}\right]\right)+\mathrm{rank}(D_{12})\\
&=\mathrm{rank}(U_{10}^TH_{11}V_2)+r_{12}\\
&=\mathrm{rank}(U_{10}^TH_{11})+r_{12}\\
&=\mathrm{rank}(H_{11})-\mathrm{dim}(\mathcal{N}(U_{10}^T)\cap\mathcal{R}(H_{11}))+r_{12}\\
&=r_{11}-\mathrm{dim}(\mathcal{R}(H_{12})\cap\mathcal{R}(H_{11}))+r_{12}\\
&=r_{11}-(r_{11}+r_{12}-\mathrm{rank}([\begin{matrix}H_{11} & H_{12}\end{matrix}]))+r_{12}\\
&=\mathrm{rank}([\begin{matrix}H_{11} & H_{12}\end{matrix}])=n_1
\end{align*}

By symmetry, a sufficient condition for receiver $t_2$ to successfully decode $\mathbf{d}_{2c}$, $\mathbf{d}_{2p}$ and $\mathbf{d}_{1c}$ is given by
\begin{align}
R_{2c}&\leq \mathrm{rank}(U_{2}^TH_{22}V_{11})\label{eq:cp21}\\
R_{2p}&\leq m_2-r_{12}\label{eq:cp22}\\
R_{1c}&\leq r_{21}\label{eq:cp23}\\
R_{2c}+R_{2p}&\leq r_{22}\label{eq:cp24}\\
R_{2c}+R_{1c}&\leq r_{21}+\mathrm{rank}(U_{20}^TH_{22}V_{11})\label{eq:cp25}\\
R_{2p}+R_{1c} &\leq r_{21}+\mathrm{rank}(U_{20}^TH_{22}V_{10})\label{eq:cp26}\\
R_{2c}+R_{2p}+R_{1c}&\leq n_2\label{eq:cp27}
\end{align}

Since $R_1=R_{1c}+R_{1p}$ and $R_2=R_{2c}+R_{2p}$, the conditions on the data rate $R_1$ and $R_2$ to ensure full decodability at the respective destinations can be obtained using Fourier-Motzkin Elimination over \eqref{eq:cp11}-\eqref{eq:cp17} and \eqref{eq:cp21}-\eqref{eq:cp27}. The detailed steps can be found in \cite{FM12} and the resulted achievable rate region is given by \eqref{eq:m1}-\eqref{eq:m7}. This completes the achievability proof.

To sum up, the precoding and decoding process in the proposed achievable scheme is depicted in Fig.\ref{F:EncDec}.

\begin{figure}[htb]
\centering
\includegraphics[scale=0.7]{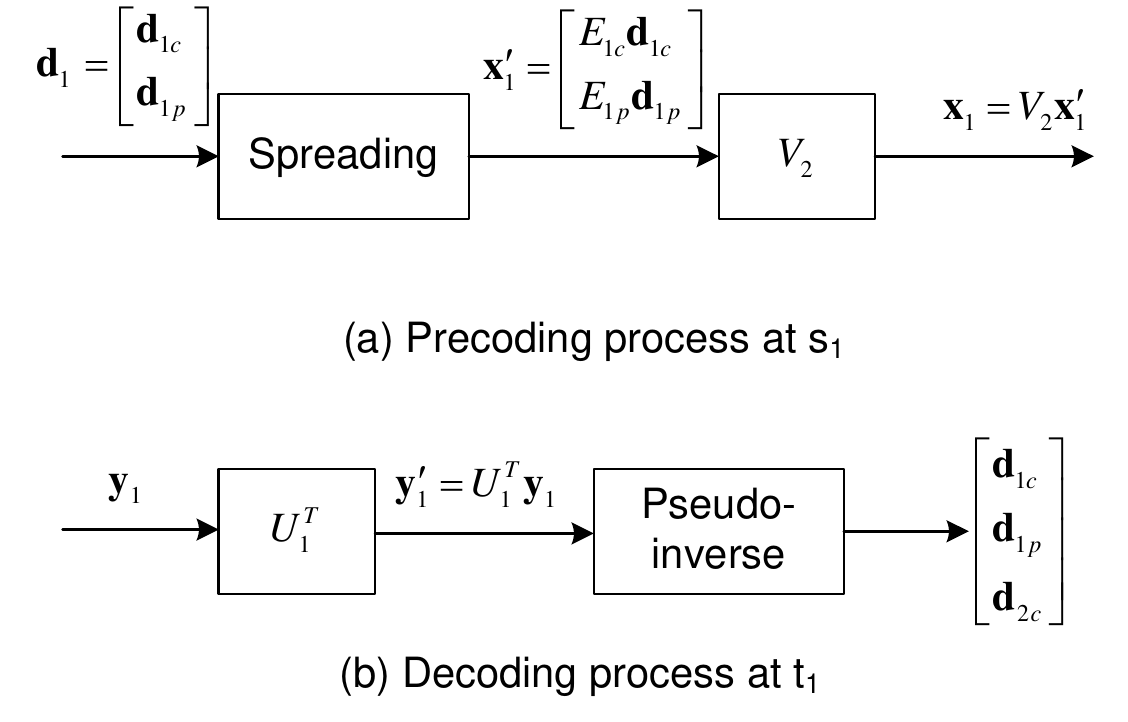}
\caption{Precoding and Decoding at $s_1$ and $t_1$ respectively}
\label{F:EncDec}
\end{figure}

\subsection{Proof of the converse}
In this subsection, we give the converse proof of Theorem~\ref{tho:main}. Firstly, it is obvious that the rates for the two-user linear deterministic IC given by \eqref{eq:DIC} are bounded by the single-user capacity, i.e. $R_1\leq r_{11}, R_2\leq r_{22}$.

For notational convenience, let $\mathbf{z}_1=H_{21}\mathbf{x}_1$, $\mathbf{z}_2=H_{12}\mathbf{x}_2$, which are the interference terms observed at $t_2$ and $t_1$, respectively. 

For $t_1$ to fully decode $\mathbf{x}_1$, it is obtained that the interference term $\mathbf{z}_2=\mathbf{y}_1-H_{11}\mathbf{x}_1$ can be uniquely determined as well. Since $V_{10}$ spans the null space of $H_{12}$, we can obtain  $\mathrm{rank}\left(\left[\begin{matrix}H_{12} \\ V_{10}^T\end{matrix}\right]\right)=m_2$, which is of full column rank. Therefore, if the term $\left[\begin{matrix}H_{12} \\ V_{10}^T\end{matrix}\right]\mathbf{x}_2$ is given, then $\mathbf{x}_2$ can be uniquely determined. As $H_{12}\mathbf x_2=\mathbf z_2$ is known, if a ``genie" provides the value for $V_{10}^T\mathbf{x}_2$ to the receiver $t_1$, then both $\mathbf{x}_1$ and $\mathbf{x}_2$ are decodable. Let $H(\cdot)$ denote the entropy, then the sum rate must satisfy
\begin{align*}
&R_1+R_2\leq H\left(\left[\begin{matrix}\mathbf{y}_1 \\ V_{10}^T\mathbf{x}_2\end{matrix}\right]\right)\\
&=H\left(\left[\begin{matrix}H_{11} & H_{12} \\0_{(m_2-r_{12})\times m_1} & V_{10}^T\end{matrix}\right]\left[\begin{matrix}\mathbf{x}_1 \\ \mathbf{x}_2\end{matrix}\right]\right)\\
&\leq \mathrm{rank}\left(\left[\begin{matrix}H_{11} & H_{12} \\0_{(m_2-r_{12})\times m_1} & V_{10}^T\end{matrix}\right]\right)\\
&\leq \mathrm{rank}\left(\left[\begin{matrix}H_{11} & H_{12}\end{matrix}\right]\right)+\mathrm{rank}\left(\left[\begin{matrix}0_{(m_2-r_{12})\times m_1} & V_{10}^T\end{matrix}\right]\right)\\
&=n_1+m_2-r_{12}
\end{align*}
This completes the proof of \eqref{eq:3}.  \eqref{eq:4} can be proved similarly.

To prove \eqref{eq:5}, the following result is needed:
\begin{lemma}\label{lem:entropyMatrix}
Let $\mathbf{x}$ be a random vector of dimension $l\times 1$, given $A\in\mathbb{R}^{p_1\times l},B\in\mathbb{R}^{p_2\times l}$, then $H(A\mathbf{x}|B\mathbf{x})\leq\mathrm{rank}\left(\left[\begin{matrix}A \\B\end{matrix}\right]\right)-\mathrm{rank}(B)$.
\end{lemma}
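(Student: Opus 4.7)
The plan is to reduce the conditional entropy on the left to an unconditional entropy of a lower-dimensional linear combination of $\mathbf{x}$, by decomposing the rows of $A$ in terms of the rows of $B$ plus a small complement. Throughout I will use the bound $H(M\mathbf{x})\leq\mathrm{rank}(M)$ for any matrix $M$, which is the same convention already invoked in the converse derivation preceding this lemma.

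First, I would set $s=\mathrm{rank}(B)$ and $r=\mathrm{rank}\left(\left[\begin{matrix}A\\B\end{matrix}\right]\right)$. Since the row space of $B$ (of dimension $s$) is a subspace of the row space of $\left[\begin{matrix}A\\B\end{matrix}\right]$ (of dimension $r$), I can extend a basis of the former to a basis of the latter using $r-s$ additional vectors, stacked into a matrix $C\in\mathbb{R}^{(r-s)\times l}$. By construction the row spaces of $\left[\begin{matrix}C\\B\end{matrix}\right]$ and $\left[\begin{matrix}A\\B\end{matrix}\right]$ coincide, so every row of $A$ lies in the span of the rows of $C$ and $B$. Consequently there exist matrices $M_1,M_2$ with $A=M_1C+M_2B$, and hence $A\mathbf{x}=M_1C\mathbf{x}+M_2B\mathbf{x}$.

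Given $B\mathbf{x}$, the term $M_2B\mathbf{x}$ is a deterministic function of the conditioning variable, so
\[
H(A\mathbf{x}\,|\,B\mathbf{x})=H(M_1C\mathbf{x}\,|\,B\mathbf{x})\leq H(M_1C\mathbf{x})\leq\mathrm{rank}(M_1C)\leq\mathrm{rank}(C)=r-s,
\]
which is exactly the claimed bound.

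The main obstacle is the first step, namely producing the complementary matrix $C$ with only $r-s$ rows whose joint row space with $B$ matches that of $\left[\begin{matrix}A\\B\end{matrix}\right]$; once this is in place the entropy argument is a one-line consequence of conditioning reducing entropy plus the standard rank bound. It is worth noting that the naive chain-rule approach $H(A\mathbf{x}\,|\,B\mathbf{x})=H(A\mathbf{x},B\mathbf{x})-H(B\mathbf{x})$ combined with $H(A\mathbf{x},B\mathbf{x})\leq r$ fails to give the lemma, because it would require a lower bound $H(B\mathbf{x})\geq s$ that need not hold for an arbitrary distribution on $\mathbf{x}$; the decomposition $A=M_1C+M_2B$ is what lets one sidestep this difficulty.
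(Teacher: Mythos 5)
Your proof is correct, but it takes a genuinely different route from the paper's. The paper works in the input space: it builds a basis $\left[\begin{smallmatrix}N_1 & N_2 & N_3\end{smallmatrix}\right]$ of $\mathbb{R}^l$ adapted to null spaces, with the columns of $N_1$ spanning $\mathcal{N}\left(\left[\begin{smallmatrix}A\\B\end{smallmatrix}\right]\right)$, the columns of $N_1,N_2$ together spanning $\mathcal{N}(B)$, and $N_3$ completing the basis; after the change of variables $\mathbf{x}=\left[\begin{smallmatrix}N_1 & N_2 & N_3\end{smallmatrix}\right]\mathbf{x}'$ it argues that $B\mathbf{x}=BN_3\mathbf{x}_3'$ determines $\mathbf{x}_3'$ (because $BN_3$ has full column rank, via Lemma~\ref{lem:2matrixMultiply}), so the only unresolved part of $A\mathbf{x}$ is $AN_2\mathbf{x}_2'$, whose entropy is at most $\mathrm{rank}(N_2)=r_{AB}-r_B$. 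You instead work in the dual, row-space picture: extend a basis of the row space of $B$ to one of the row space of $\left[\begin{smallmatrix}A\\B\end{smallmatrix}\right]$ by $r-s$ rows collected in $C$, write $A=M_1C+M_2B$, and strip off $M_2B\mathbf{x}$ as a deterministic function of the conditioning variable. The two decompositions are dual and count the same quantity, but yours is somewhat more economical: it needs no change of variables, no appeal to Lemma~\ref{lem:2matrixMultiply}, and no recoverability argument for $\mathbf{x}_3'$; the paper's version, in exchange, exhibits the explicit coordinates $\mathbf{x}_2'$ carrying the residual uncertainty, which matches the constructive flavor of its precoding arguments. Both proofs rest on the same convention $H(M\mathbf{x})\leq\mathrm{rank}(M)$ used throughout the converse, and your closing observation—that the naive chain-rule route fails because it would need $H(B\mathbf{x})\geq\mathrm{rank}(B)$, which an arbitrary input distribution need not satisfy—is accurate.
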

\begin{proof}
Please refer to Appendix~\ref{A:entropyMatrix}.
\end{proof}

Since $\mathbf{x}_1$ and $\mathbf{z}_2$ can be simultaneously determined from $\mathbf{y}_1$ at $t_1$, we have $H(\mathbf{x}_1, \mathbf{z}_2)\leq H(\mathbf{y}_1)$. Furthermore, as $\mathbf{x}_1$ and $\mathbf{z}_2$ are independent, we have $H(\mathbf{x}_1, \mathbf{z}_2)=H(\mathbf{x}_1)+H(\mathbf{z}_2)$, which gives  $H(\mathbf{x}_1)\leq H(\mathbf{y}_1)-H(\mathbf{z}_2)$. Similarly, it can be shown that $H(\mathbf{x}_2)\leq H(\mathbf{y}_2)-H(\mathbf{z}_1)$. Therefore, we have
\begin{align*}
R_1&+R_2\leq H(\mathbf{x}_1)+H(\mathbf{x}_2)\\
\leq &H(\mathbf{y}_1)-H(\mathbf{z}_2)+H(\mathbf{y}_2)-H(\mathbf{z}_1)\\
\leq &H(\mathbf{y}_1, \mathbf z_1)-H(\mathbf{z}_1)+H(\mathbf{y}_2, \mathbf z_2)-H(\mathbf{z}_2)\\
= &H(\mathbf{y}_1|\mathbf{z}_1)+H(\mathbf{y}_2|\mathbf{z}_2)\\
\overset{(a)}{\leq}&\mathrm{rank}\left(\left[\begin{matrix}H_{11} & H_{12}\\ H_{21} & 0\end{matrix}\right]\right)
-\mathrm{rank}\left(\left[\begin{matrix}H_{21} & 0\end{matrix}\right]\right)
\\&+\mathrm{rank}\left(\left[\begin{matrix}H_{21} & H_{22} \\ 0 & H_{12}\end{matrix}\right]\right)-\mathrm{rank}\left(\left[\begin{matrix}0 & H_{12}\end{matrix}\right]\right)\\
=&\mathrm{rank}\left(\left[\begin{matrix}H_{11} & H_{12}\\ H_{21} & 0\end{matrix}\right]\right)+\mathrm{rank}\left(\left[\begin{matrix}H_{21} & H_{22} \\ 0 & H_{12}\end{matrix}\right]\right)-r_{12}-r_{21}
\end{align*}
where (a) follows from Lemma \ref{lem:entropyMatrix}. This completes the proof of \eqref{eq:5}.

To prove \eqref{eq:6}, similar arguments can be used, i.e.,
\begin{align*}
2R_1&+R_2\leq 2H(\mathbf{x}_1)+H(\mathbf{x}_2)\\
\leq & 2H(\mathbf{y}_1)-2H(\mathbf{z}_2)+H(\mathbf{y}_2)-H(\mathbf{z}_1)\\
\leq & H(\mathbf{y}_1)+H(\mathbf{y}_1, \mathbf z_1, \mathbf z_2)-2H(\mathbf{z}_2)+H(\mathbf{y}_2)-H(\mathbf{z}_1)\\
= & H(\mathbf{y}_1)+H(\mathbf{y}_1|\mathbf{z}_1,\mathbf{z}_2)+H(\mathbf{y_2}|\mathbf{z}_2)\\
\overset{(b)}{\leq} & n_1+\mathrm{rank}\left(\left[\begin{matrix}H_{11} & H_{12}\\H_{21} & 0\\ 0 & H_{12}\end{matrix}\right]\right)-\mathrm{rank}\left(\left[\begin{matrix}H_{21} & 0\\ 0 & H_{12}\end{matrix}\right]\right)\\
&+\mathrm{rank}\left(\left[\begin{matrix}H_{21} & H_{22} \\ 0 & H_{12}\end{matrix}\right]\right)-r_{12}\\
=&n_1+m_1+\mathrm{rank}\left(\left[\begin{matrix}H_{21} & H_{22} \\ 0 & H_{12}\end{matrix}\right]\right)-r_{12}-r_{21}
\end{align*}
where (b) follows from Lemma \ref{lem:entropyMatrix}. By symmetry, \eqref{eq:7} can be proved similarly.

This completes the proof of the converse for Theorem~\ref{tho:main}.
\subsection{Discussion}

In this subsection, we show that the DoF region for the two-user Gaussian IC given in \eqref{eq:JafarBound} is a special case of the capacity region for the linear deterministic IC specified in \eqref{eq:1}-\eqref{eq:7}, which is obtained with $H_{ij},i,j\in\{1,2\}$ being random (thus full rank) and independent.

If all the channel matrices are full rank, we have $r_{11}=\min\{m_1,n_1\}$, $r_{22}=\min\{m_2,n_2\}$, $r_{12}=\min\{m_2,n_1\}$, and $r_{21}=\min\{m_1,n_2\}$. Since they are also independent,  $\mathrm{rank}\left(\left[\begin{matrix}H_{11} & H_{12} \\ H_{21} & 0\end{matrix}\right]\right)=\mathrm{rank}\left(\left[\begin{matrix}H_{21} & H_{22} \\ 0& H_{12}\end{matrix}\right]\right)=\min\{m_1+m_2,n_1+n_2\}$. Thus, \eqref{eq:1}-\eqref{eq:7} reduce to
\begin{align}
R_1\leq&\min\{m_1,n_1\}\label{eq:nondegenerated1}\\
R_2\leq&\min\{m_2,n_2\}\label{eq:nondegenerated2}\\
R_1+R_2\leq &n_1+m_2-\min\{m_2,n_1\}=\max\{m_2,n_1\}\label{eq:nondegenerated3}\\
R_1+R_2\leq &n_2+m_1-\min\{m_1,n_2\}=\max\{m_1,n_2\}\label{eq:nondegenerated4}\\
R_1+R_2\leq &2\min\{m_1+m_2,n_1+n_2\}\nonumber\\
&-\min\{m_2,n_1\}-\min\{m_1,n_2\}\label{eq:nondegenerated5}\\
2R_1+R_2\leq &m_1+n_1+\min\{m_1+m_2,n_1+n_2\}\nonumber\\
&-\min\{m_2,n_1\}-\min\{m_1,n_2\}\label{eq:nondegenerated6}\\
R_1+2R_2\leq &m_2+n_2+\min\{m_1+m_2,n_1+n_2\}\nonumber\\
&-\min\{m_2,n_1\}-\min\{m_1,n_2\}\label{eq:nondegenerated7}
\end{align}

If $m_1\geq n_2$, $m_2\geq n_1$ and $m_1+m_2\geq n_1+n_2$, \eqref{eq:nondegenerated1}-\eqref{eq:nondegenerated7} reduce to
\begin{align}
R_1&\leq\min\{m_1,n_1\}\label{eq:reduce1}\\
R_2&\leq\min\{m_2,n_2\}\label{eq:reduce2}\\
R_1+R_2&\leq m_2\label{eq:reduce3}\\
R_1+R_2&\leq m_1\label{eq:reduce4}\\
R_1+R_2&\leq n_1+n_2\label{eq:reduce5}\\
2R_1+R_2&\leq m_1+n_1\label{eq:reduce6}\\
R_1+2R_2&\leq m_2+n_2\label{eq:reduce7}
\end{align}
Note that \eqref{eq:reduce6} is implied by \eqref{eq:reduce1} and \eqref{eq:reduce4}, and thus is redundant. Similarly, \eqref{eq:reduce7} is also redundant. The region formed by \eqref{eq:reduce1}-\eqref{eq:reduce7} is exactly the same as the that given in \eqref{eq:JafarBound} under the same conditions that $m_1\geq n_2$, $m_2\geq n_1$ and $m_1+m_2\geq n_1+n_2$. Other cases can be proved in a similar manner and thus are omitted here for brevity.

\section{An Achievable Region for Double-Unicast Networks}\label{sec:nc}
A double-unicast network can be represented by a directed acyclic graph $G=(V,E)$ with two sources $S=\{s_1,s_2\}\subset V$ and two receivers $T=\{t_1,t_2\}\subset V$. Similar to the two-user ICs, $s_1$ and $s_2$ are intended to send independent messages to $t_1$ and $t_2$, respectively, and inter-user interference is resulted since both pairs share the same network.

Assuming that each edge is capable of carrying one symbol per time slot. By Merger's theorem, the minimum cut between sets $S_{N_1}\subseteq S$ and $T_{N_2}\subseteq T$ is the number of edge disjoint paths from $S_{N_1}$ to $T_{N_2}$ (denoted by $k_{N_1-N_2}$), where $N_1,N_2\subseteq\{1,2\}$. With random linear network coding performed at all intermediate nodes, the double-unicast network can be modeled as a two-user linear deterministic IC where the transition matrices are determined by the network topology and the coding coefficients chosen at each intermediate node. Without loss of generality, the dimension and the rank of the transition matrices are represented by the min-cuts of the network, i.e., $k_{N_1-N_2}$ as shown in Table \ref{table1} \cite{Huang11}.\\
\begin{table}
\caption{Dimension and Rank of Matrices}
\centering
\begin{tabular}{|l || c |r|}
  \hline
   Channel Matrix & Size & Rank \\
   \hline
  $H_{11}$ & $k_{12-1}\times k_{1-12}$ & $k_{1-1}$ \\
  \hline
  $H_{12}$ & $k_{12-1}\times k_{2-12}$ & $k_{2-1}$ \\
  \hline
  $H_{21}$ & $k_{12-2}\times k_{1-12}$ & $k_{1-2}$ \\
  \hline
  $H_{22}$ & $k_{12-2}\times k_{2-12}$ & $k_{2-2}$ \\
  \hline
  $\left[\begin{matrix}H_{11} & H_{12}\end{matrix}\right]$ & $k_{12-1}\times(k_{1-12}+k_{2-12})$ & $k_{12-1}$ \\
  \hline
  $\left[\begin{matrix}H_{21} & H_{22}\end{matrix}\right]$ & $k_{12-2}\times(k_{1-12}+k_{2-12})$ & $k_{12-2}$ \\
  \hline
  $\left[\begin{matrix}H_{11} \\ H_{21}\end{matrix}\right]$ & $(k_{12-1}+k_{12-2})\times k_{1-12}$ & $k_{1-12}$\\
  \hline
  $\left[\begin{matrix}H_{12} \\ H_{22}\end{matrix}\right]$ & $(k_{12-1}+k_{12-2})\times k_{2-12}$ & $k_{2-12}$\\
  \hline
\end{tabular}
\label{table1}
\end{table}

With Theorem \ref{tho:main}, the rate pair $(R_1,R_2)$ is achievable if  the following conditions are satisfied:
\begin{align}
R_1\leq &k_{1-1}\label{eq:ncbound1}\\
R_2\leq &k_{2-2}\label{eq:ncbound2}\\
R_1+R_2\leq &k_{12-1}+k_{2-12}-k_{2-1}\label{eq:ncbound4}\\
R_1+R_2\leq &k_{12-2}+k_{1-12}-k_{1-2}\label{eq:ncbound3}\\
R_1+R_2\leq &\mathrm{rank}\left(\left[\begin{matrix}H_{11} & H_{12} \\
H_{21} & 0\end{matrix}\right]\right)+\mathrm{rank}\left(\left[\begin{matrix}H_{21} & H_{22} \\
0 & H_{12}\end{matrix}\right]\right)\nonumber\\
&-k_{1-2}-k_{2-1}\label{eq:ncbound5}\\
2R_1+R_2\leq &k_{12-1}+k_{1-12}+\mathrm{rank}\left(\left[\begin{matrix}H_{21} & H_{22} \\
0 & H_{12}\end{matrix}\right]\right)\nonumber\\
&-k_{1-2}-k_{2-1}\label{eq:ncbound6}\\
R_1+2R_2\leq &k_{12-2}+k_{2-12}+\mathrm{rank}\left(\left[\begin{matrix}H_{11} & H_{12} \\
H_{21} & 0\end{matrix}\right]\right)\nonumber\\
&-k_{1-2}-k_{2-1}\label{eq:ncbound7}
\end{align}

Note that although \eqref{eq:1}-\eqref{eq:7} give the capacity region of the linear deterministic IC, \eqref{eq:ncbound1}-\eqref{eq:ncbound7} only give an achievable region for double-unicast networks since random linear network coding at all the intermediate nodes  may be sub-optimal. 
Nevertheless, as random network coding can be practically implemented due to its simplicity, it is widely used in practice \cite{Erez09,Huang11,Das10}.

\subsection{Comparison with Existing Results}
In this subsection, the achievable region given by \eqref{eq:ncbound1}-\eqref{eq:ncbound7} is compared with that in \cite{Huang11} and \cite{Erez09}, which both consider the double-unicast networks.

\subsubsection{Comparison with \cite{Huang11}}
In \cite{Huang11},  two scenarios are considered separately, i.e., the low interference case with $k_{1-2}+k_{2-1}\leq \min (k_{12-1}, k_{12-2})$ and the high interference case with $k_{1-2}+k_{2-1}\geq \min (k_{12-1}, k_{12-2})$. They give rise to the following achievable regions:

\begin{tabular}{| l | l |}
\hline
  \textnormal{Region 1}&\textnormal{Region 2}  \\
    $R_1 \leq k_{12-1}-k_{2-1}$ & $R_1 \leq k_{1-1}$ \\
    $R_2 \leq k_{12-2}-k_{1-2}$ & $R_2\leq\min\{k_{12-1},k_{12-2}\}-k_{1-2}$ \\
    & $R_1+R_2\leq\mathrm{rank}(\left[\begin{matrix}H_{11} & H_{12}M_2\end{matrix}\right])$\\
    \hline
    \multicolumn{2}{| l |}{Region 3}\\
    \multicolumn{2}{| l |}{$R_1\leq\min\{k_{12-1},k_{12-2}\}-k_{2-1}$}\\
    \multicolumn{2}{| l |}{$R_2\leq k_{2-2}$}\\
    \multicolumn{2}{| l |}{$R_1+R_2\leq \mathrm{rank}(\left[\begin{matrix}H_{21}M_1 & H_{22}\end{matrix}\right])$}\\
    \hline
\end{tabular}


where $M_1$ and $M_2$ are some mapping matrices of size $k_{1-12}\times R_1$ and $k_{2-12}\times R_2$, respectively. It is shown in \cite{Huang11} that for the low interference case, the rate pairs that are in the convex hull of Regions 1, 2 and 3  are achievable, and for the high interference case,  the convex hull of Regions 2 and 3 is achievable. 

\begin{lemma}\label{lem:compare1}
The proposed achievable region  specified in \eqref{eq:ncbound1}-\eqref{eq:ncbound7} for the double-unicast networks is larger than the convex hull of Region 1, 2 and 3.
\end{lemma}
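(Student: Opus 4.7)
The plan is to exploit the convexity of the region specified by \eqref{eq:ncbound1}-\eqref{eq:ncbound7}. Since these are linear constraints in $(R_1,R_2)$, they cut out a convex polytope, so the convex hull of Regions~1, 2 and 3 is contained in the proposed region if and only if each of Region~1, Region~2 and Region~3 is. This reduces the lemma to three separate containment checks; by the $(s_1,t_1)\leftrightarrow(s_2,t_2)$ symmetry of the proposed region and of the statements of Regions~2 and 3, the third check follows from the second, leaving only two.

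For each region, the approach is to take an arbitrary feasible $(R_1,R_2)$ and verify all seven inequalities \eqref{eq:ncbound1}-\eqref{eq:ncbound7} using two families of elementary facts. The first family consists of min-cut monotonicity and subadditivity inherited from Menger's theorem, such as $k_{i-j}\leq k_{i-12}$, $k_{i-j}\leq k_{12-j}$, and $k_{12-j}\leq k_{1-j}+k_{2-j}$. The second family consists of standard matrix rank inequalities together with Table~\ref{table1}: $\mathrm{rank}[A\;B]\leq \mathrm{rank}(A)+\mathrm{rank}(B)$, the rank of any submatrix lower-bounds the rank of the whole matrix, and block-antidiagonal matrices have rank equal to the sum of the diagonal-block ranks. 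Applying the last fact to the structured matrices in Table~\ref{table1} yields the key lower bound
\begin{equation*}
\mathrm{rank}\!\left[\begin{matrix}H_{11} & H_{12} \\ H_{21} & 0\end{matrix}\right]\geq \max\{k_{12-1},\ k_{1-12},\ k_{2-1}+k_{1-2}\},
\end{equation*}
together with the analogous lower bound on the other $2\times 2$ block matrix. For Region~1, \eqref{eq:ncbound1}-\eqref{eq:ncbound2} then follow from $k_{12-1}\leq k_{1-1}+k_{2-1}$ and its symmetric counterpart; \eqref{eq:ncbound3}-\eqref{eq:ncbound4} follow from $k_{12-j}-k_{i-j}\leq k_{j-j}\leq k_{j-12}$; \eqref{eq:ncbound5} follows by combining the two top-row-block lower bounds $\mathrm{rank}[\cdot]\geq k_{12-1}$ and $\geq k_{12-2}$; and \eqref{eq:ncbound6}-\eqref{eq:ncbound7} are obtained by adding one of the single-rate bounds to \eqref{eq:ncbound5}. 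For Region~2, the constraint $R_1+R_2\leq \mathrm{rank}[H_{11}\ H_{12}M_2]$ is first relaxed via the row-count upper bound $\mathrm{rank}[H_{11}\ H_{12}M_2]\leq k_{12-1}$, and the remaining verifications proceed by combining $R_1+R_2\leq k_{12-1}$ with the single-rate bounds $R_1\leq k_{1-1}$ and $R_2\leq k_{12-2}-k_{1-2}$ and invoking the same rank and min-cut bounds.

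The main obstacle is not any single deep inequality but the bookkeeping of verifying roughly fourteen inequalities and, in each case, identifying the correct submatrix or min-cut relation that supplies the needed bound. The most delicate verifications are \eqref{eq:ncbound6}-\eqref{eq:ncbound7} for Region~2, where one must expand $2R_1+R_2=R_1+(R_1+R_2)$ or $R_1+2R_2=R_2+(R_1+R_2)$ and then trade $k_{12-j}$ against $k_{j-12}$ via $k_{j-j}\leq k_{j-12}$ so that the surplus on the right-hand side absorbs the extra rate term. Once this manipulation is identified, every verification becomes a one-line application of the bounds enumerated above.
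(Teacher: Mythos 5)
Your proposal is correct and follows essentially the same route as the paper: exploit convexity to reduce to checking that each of Regions 1, 2, 3 is contained, relax Region 2's rank-of-product constraint to $R_1+R_2\leq k_{12-1}$, and then verify the seven inequalities one by one via min-cut subadditivity, submatrix rank bounds, and the key fact that the two $2\times 2$ block matrices have rank at least $k_{1-2}+k_{2-1}$. The one small departure is how you justify that last fact: you invoke the elementary block-triangular/block-antidiagonal rank lower bound $\mathrm{rank}\big[\begin{smallmatrix}A & B\\ 0 & D\end{smallmatrix}\big]\geq\mathrm{rank}(A)+\mathrm{rank}(D)$, whereas the paper derives it from the exact SVD-based identity in Lemma~\ref{lem:complexityReduce2}; both are sound, and your version is a touch more self-contained.
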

\begin{proof}
The  region specified in \eqref{eq:ncbound1}-\eqref{eq:ncbound7} is convex. Therefore, to show that it is larger than the convex hull of Regions 1, 2 and 3, it is sufficient to show that all the individual Regions 1, 2 and 3 are within the proposed region.

To prove that Region 1 is within the proposed region, we show that any rate pair $(R_1,R_2)$ in Region 1  satisfy \eqref{eq:ncbound1}-\eqref{eq:ncbound7}.
\begin{align}
R_1\leq k_{12-1}-k_{2-1}\overset{(a)}{\leq} k_{1-1} \label{eq:Region1R1}
\end{align}
where (a) follows $k_{12-1}=\mathrm{rank}\left(\left[\begin{matrix}H_{11}&H_{12}\end{matrix}\right]\right)\leq\mathrm{rank}(H_{11})+\mathrm{rank}(H_{12})=k_{1-1}+k_{2-1}$.
Similarly, we have $R_2\leq k_{12-2}-k_{1-2}\leq k_{2-2}$. Furthermore, the following inequalities are satisfied:
\begin{align}
R_1+R_2&\leq k_{12-1}-k_{2-1}+k_{12-2}-k_{1-2}\label{eq:Region1sum}\\
&\leq k_{1-1}+k_{12-2}-k_{1-2}\\
&\leq k_{1-12}+k_{12-2}-k_{1-2}
\end{align}
Thus, \eqref{eq:ncbound3} is satisfied. Similarly,  $R_1+R_2\leq k_{12-1}+k_{2-12}-k_{2-1}$ can be proved. With the inequality given in \eqref{eq:Region1sum} and the following relations,
\begin{align}
\mathrm{rank}\left(\left[\begin{matrix}H_{11} & H_{12} \\
H_{21} & 0\end{matrix}\right]\right)\geq\mathrm{rank}\left(\left[\begin{matrix}H_{11} & H_{12}\end{matrix}\right]\right)=k_{12-1}\label{eq:simplefact1}\\
\mathrm{rank}\left(\left[\begin{matrix}H_{21} & H_{22} \\
0 & H_{12}\end{matrix}\right]\right)\geq\mathrm{rank}\left(\left[\begin{matrix}H_{21} & H_{22}\end{matrix}\right]\right)=k_{12-2}\label{eq:simplefact2}
\end{align}
 \eqref{eq:ncbound5} can be proved. Furthermore, \eqref{eq:ncbound6} is satisfied since
\begin{align*}
2R_1+R_2\overset{(b)}{\leq}&k_{1-1}+k_{12-1}-k_{2-1}+k_{12-2}-k_{1-2}\\
\leq & {k_{1-12}+k_{12-1}-k_{2-1}+k_{12-2}-k_{1-2}}\\
\overset{(c)}{\leq}&\mathrm{rank}\left(\left[\begin{matrix}H_{21} & H_{22} \\
0 & H_{12}\end{matrix}\right]\right)+k_{1-12}+k_{12-1}\\
&-k_{2-1}-k_{1-2}
\end{align*}
where (b) follows from \eqref{eq:Region1R1} and \eqref{eq:Region1sum}, and (c) is true due to \eqref{eq:simplefact2}. By symmetry, it can be shown that  \eqref{eq:ncbound7} is also satisfied. This completes the proof that Region 1 is within our proposed region.

In order to show that Region $2$ is also within our proposed region, the following region (denoted as Region $2'$) is defined:
\begin{align}
R_1&\leq k_{1-1}\\
R_2&\leq \min\{k_{12-1},k_{12-2}\}-k_{1-2}\\
R_1+R_2&\leq k_{12-1}
\end{align}
It is obvious that Region $2'$ is no smaller than Region 2 since $\mathrm{rank}([\begin{matrix}H_{11} & H_{12}M_{2}\end{matrix}])\leq\mathrm{rank}\left(\left[\begin{matrix}H_{11} & H_{12}\end{matrix}\right]\right)=k_{12-1}, \forall M_2$. Then it is sufficient to prove that any rate pair $(R_1,R_2)$ in Region $2'$ satisfy \eqref{eq:ncbound1}-\eqref{eq:ncbound7}, which is given as follows:
\begin{align*}
R_1\leq &k_{1-1}\\
R_2\leq &\min\{k_{12-1},k_{12-2}\}-k_{1-2}\nonumber\\
\leq &k_{12-2}-k_{1-2}\overset{(a)}{\leq} k_{2-2}\\
R_1+R_2\leq &k_{1-1}+\min\{k_{12-1},k_{12-2}\}-k_{1-2}\nonumber\\
\overset{(b)}{\leq} &k_{1-12}+k_{12-2}-k_{1-2}\\
R_1+R_2\leq &k_{12-1}\overset{(c)}{\leq} k_{12-1}+(k_{2-12}-k_{2-1})\\
R_1+R_2\leq &k_{12-1}\overset{(d)}{\leq} \mathrm{rank}\left(\left[\begin{matrix}H_{11} & H_{12} \\
H_{21} & 0_{k_{12-2}\times k_{2-12}}\end{matrix}\right]\right)\\
\overset{(e)}{\leq}&\mathrm{rank}\left(\left[\begin{matrix}H_{11} & H_{12} \\
H_{21} & 0\end{matrix}\right]\right)+\mathrm{rank}\left(\left[\begin{matrix}H_{21} & H_{22} \\
0 & H_{12}\end{matrix}\right]\right)\nonumber\\
&-k_{2-1}-k_{1-2}\\
2R_1+R_2\leq &k_{12-1}+k_{1-1}\leq k_{12-1}+k_{1-12}\\
\overset{(f)}{\leq}&k_{12-1}+k_{1-12}+\mathrm{rank}\left(\left[\begin{matrix}H_{21} & H_{22} \\
0 & H_{12}\end{matrix}\right]\right)\nonumber\\
&-k_{2-1}-k_{1-2}\\
R_1+2R_2\leq & k_{12-1}+\min\{k_{12-1},k_{12-2}\}-k_{1-2}\\
\leq &k_{12-1}+k_{12-2}-k_{1-2}\\
\overset{(g)}{\leq}&\mathrm{rank}\left(\left[\begin{matrix}H_{11} & H_{12} \\
H_{21} & 0 \end{matrix}\right]\right)+k_{12-2}-k_{1-2}\\
\overset{(h)}{\leq}&\mathrm{rank}\left(\left[\begin{matrix}H_{11} & H_{12} \\
H_{21} & 0\end{matrix}\right]\right)+k_{12-2}-k_{1-2}\nonumber\\
&+k_{2-12}-k_{2-1}
\end{align*}
where (a) follows from similar arguments as \eqref{eq:Region1R1}; (b) is true since $ k_{1-1}\leq k_{1-12}$; (c) is satisfied due to $k_{2-1}\leq k_{2-12}$; (d) is obtained by using \eqref{eq:simplefact1}, and (e) follows from Lemma \ref{lem:complexityReduce2}, which gives
\begin{align*}
\mathrm{rank}\left[\begin{matrix}H_{21} & H_{22} \\ 0 & H_{12}\end{matrix}\right]-k_{2-1}-k_{1-2}=\mathrm{rank}(U_{20}^{T}H_{22}V_{10})\geq 0
\end{align*}
(f) can be shown with Lemma \ref{lem:complexityReduce2} in a similar manner; (g) is true due to \eqref{eq:simplefact1}; and (h) is satisfied since $k_{2-12}\geq k_{2-1}$. This completes the proof that Region $2'$ is within the proposed achievable region. Therefore, Region 2 must be within our proposed region too. By symmetry,  Region 3 can be shown to be within the proposed region as well. The details are omitted for brevity.

This completes the proof of Lemma~\ref{lem:compare1}.
\end{proof}

\subsubsection{Comparison with \cite{Erez09}}
The achievable region derived in \cite{Erez09} is given by the convex hull of the following two regions:
\[
\begin{array}{cc}
  \textnormal{Region 4}&\textnormal{Region 5} \\
  \begin{matrix}R_1\leq k_{1-1}\\ 2R_1+R_2\leq k_{2-2}\end{matrix}  & \begin{matrix}R_1+2R_2\leq k_{1-1}\\R_2\leq k_{2-2}\end{matrix}
\end{array}
\]
\begin{lemma}\label{lem:compare2}
The proposed achievable region  specified in \eqref{eq:ncbound1}-\eqref{eq:ncbound7} for the double-unicast networks is 
 larger than the convex hull of Region 4 and Region 5.
\end{lemma}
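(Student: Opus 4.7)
The plan is to exploit two structural observations and then do a bounded case-check. First, the region specified by \eqref{eq:ncbound1}--\eqref{eq:ncbound7} is a convex polytope, so it suffices to prove that Region 4 and Region 5 are \emph{each} contained in it (the convex hull is then automatic). Second, Regions 4 and 5 are symmetric under the interchange $1\leftrightarrow 2$, and so are \eqref{eq:ncbound1}--\eqref{eq:ncbound7}; hence it is enough to show the containment for Region 4, and the argument for Region 5 follows by relabeling. I therefore fix an arbitrary $(R_1,R_2)$ with $R_1\le k_{1-1}$ and $2R_1+R_2\le k_{2-2}$, and check all seven inequalities.

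The basic ingredients I plan to use are the elementary rank inequalities already available in the paper: (i) the submatrix monotonicity $k_{12-i}\ge k_{j-i}$ and $k_{i-12}\ge k_{i-j}$ for all choices of indices, (ii) \eqref{eq:simplefact1}--\eqref{eq:simplefact2}, namely $\mathrm{rank}\bigl[\begin{smallmatrix}H_{11}&H_{12}\\H_{21}&0\end{smallmatrix}\bigr]\ge k_{12-1}$ and $\mathrm{rank}\bigl[\begin{smallmatrix}H_{21}&H_{22}\\0&H_{12}\end{smallmatrix}\bigr]\ge k_{12-2}$, and (iii) the ``block anti-diagonal'' bound $\mathrm{rank}\bigl[\begin{smallmatrix}A&B\\0&C\end{smallmatrix}\bigr]\ge\mathrm{rank}(A)+\mathrm{rank}(C)$ applied after an obvious row/column permutation, which yields $\mathrm{rank}\bigl[\begin{smallmatrix}H_{11}&H_{12}\\H_{21}&0\end{smallmatrix}\bigr]\ge k_{2-1}+k_{1-2}$. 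From the Region 4 constraints I will first extract the consequences $R_2\le k_{2-2}-2R_1\le k_{2-2}$, $R_1+R_2\le k_{2-2}-R_1\le k_{2-2}$, and $R_1+2R_2\le 2k_{2-2}-R_1\le 2k_{2-2}$; these are the only combinations of $R_1,R_2$ that I will actually need to upper-bound.

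With these in hand, each target inequality collapses to a short chain. For \eqref{eq:ncbound1} the bound is immediate; for \eqref{eq:ncbound2} I use $R_2\le k_{2-2}$. For \eqref{eq:ncbound3} and \eqref{eq:ncbound4}, I bound $R_1+R_2\le k_{2-2}\le k_{2-12}\le k_{12-1}+k_{2-12}-k_{2-1}$ (using $k_{12-1}\ge k_{2-1}$) and symmetrically $R_1+R_2\le k_{2-2}\le k_{12-2}\le k_{12-2}+k_{1-12}-k_{1-2}$. The main obstacle is the three block-rank inequalities \eqref{eq:ncbound5}--\eqref{eq:ncbound7}, since their right-hand sides subtract off $k_{2-1}+k_{1-2}$. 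For \eqref{eq:ncbound5} I plan to combine the block anti-diagonal bound on the first rank with $\mathrm{rank}\bigl[\begin{smallmatrix}H_{21}&H_{22}\\0&H_{12}\end{smallmatrix}\bigr]\ge k_{2-12}$, which gives RHS $\ge k_{2-12}\ge k_{2-2}\ge R_1+R_2$. For \eqref{eq:ncbound6} I use $k_{12-1}\ge k_{2-1}$, $k_{1-12}\ge k_{1-2}$, and the same lower bound $k_{2-12}$ on the second rank to obtain RHS $\ge k_{2-12}\ge k_{2-2}\ge 2R_1+R_2$. For \eqref{eq:ncbound7} I use $k_{12-2}\ge k_{2-2}$, $k_{2-12}\ge k_{2-2}$, and the block anti-diagonal bound $\mathrm{rank}\bigl[\begin{smallmatrix}H_{11}&H_{12}\\H_{21}&0\end{smallmatrix}\bigr]\ge k_{2-1}+k_{1-2}$ to obtain RHS $\ge 2k_{2-2}\ge R_1+2R_2$. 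All seven checks reduce to one-line arithmetic, and then a symmetric argument handles Region 5, completing the proof.
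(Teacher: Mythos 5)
Your proposal is correct and follows essentially the same strategy as the paper: invoke convexity to reduce to showing each of Region 4 and Region 5 separately lies inside the polytope, treat only Region 4 by the $1\leftrightarrow 2$ symmetry, and verify each of the seven inequalities by reducing $R_1+R_2$, $2R_1+R_2$, $R_1+2R_2$ to multiples of $k_{2-2}$ and then chaining elementary min-cut/rank inequalities. The one small departure is cosmetic: where the paper cites Lemma~\ref{lem:complexityReduce1} (and \ref{lem:complexityReduce2}) to obtain $\mathrm{rank}\bigl[\begin{smallmatrix}H_{11}&H_{12}\\H_{21}&0\end{smallmatrix}\bigr]\ge k_{2-1}+k_{1-2}$ and the lower bound $k_{2-12}$ on the other block matrix, you derive the same bounds from the elementary block-triangular rank inequality and submatrix monotonicity, which is equally valid and slightly more self-contained.
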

\begin{proof}
Since the proposed region is convex, it is sufficient to show that both Region 4 and Region 5 are within the proposed region. For any rate pair $(R_1, R_2)$ in Region 4,  \eqref{eq:ncbound1}-\eqref{eq:ncbound7} are satisfied since
\begin{align*}
R_1
\leq &k_{1-1}\\
R_2
\leq &k_{2-2}-2R_1\leq k_{2-2}\\
R_1+R_2
\leq &k_{2-2}-R_1\leq k_{2-2}\leq k_{12-2}\\
\leq &k_{12-2}+k_{1-12}-k_{1-2}\\
R_1+R_2
\leq &k_{2-2}-R_1\leq k_{2-2}\leq k_{2-12}\\
\leq &k_{2-12}+k_{12-1}-k_{2-1}\\
R_1+R_2
\leq &k_{12-2}\leq \mathrm{rank}\left(\left[\begin{matrix}H_{21} & H_{22} \\0_{k_{12-1}\times k_{1-12}} & H_{12}\end{matrix}\right]\right)\\
\overset{(a)}{\leq} &\mathrm{rank}\left(\left[\begin{matrix}H_{21} & H_{22}\\
0 & H_{12}\end{matrix}\right]\right)+\mathrm{rank}\left(\left[\begin{matrix}H_{11} & H_{12} \\
H_{21} & 0\end{matrix}\right]\right)\\
&-k_{2-1}-k_{1-2}\\
2R_1+R_2
\leq &k_{2-2}\leq k_{12-2}\leq \mathrm{rank}\left(\left[\begin{matrix}H_{21} & H_{22} \\
0 & H_{12}\end{matrix}\right]\right)\\
\leq &\mathrm{rank}\left(\left[\begin{matrix}H_{21} & H_{22} \\
0 & H_{12}\end{matrix}\right]\right)+(k_{1-12}-k_{1-2})\\
&+(k_{12-1}-k_{2-1})\\
R_1+2R_2\leq &2k_{2-2}\leq k_{2-12}+k_{12-2}\\
\overset{(b)}{\leq}  &k_{2-12}+k_{12-2}+\mathrm{rank}\left(\left[\begin{matrix}H_{11} & H_{12} \\
H_{21} & 0\end{matrix}\right]\right)\\
&-k_{1-2}-k_{2-1}
\end{align*}
where (a) and (b) follows from Lemma \ref{lem:complexityReduce1}.

By symmetry,  any rate pairs $(R_1,R_2)$ in Region $5$ can be shown to satisfy \eqref{eq:ncbound1}-\eqref{eq:ncbound7}. This completes the proof of Lemma \ref{lem:compare2}.
\end{proof}
\subsubsection{Discussion}
It was pointed out in \cite{Huang11} that for certain network topologies,  there exist some rate pairs that are achievable by the scheme in \cite{Huang11} but not achievable by the scheme in \cite{Erez09}, and vice versa.  According to Lemma \ref{lem:compare1} and Lemma \ref{lem:compare2}, our region is larger than both regions given in \cite{Huang11} and \cite{Erez09}. Therefore, it can be concluded that our proposed region is strictly larger than both of them. Actually, there exists some network instance (for example, the network shown in Fig. \ref{F:2Unicast}) where our proposed region is strictly better than the union of the region given in \cite{Huang11} and  \cite{Erez09}.

\subsection{Implementing the Network Code in Finite Field}
To achieve the region specified in \eqref{eq:ncbound1}-\eqref{eq:ncbound7}, the standard MIMO SVD technique has been used. SVD is well defined for real matrices but not for matrices in finite filed. Therefore, to achieve the proposed region for the double-unicast networks within certain finite filed, some slight modifications for the previously discussed achievability scheme is required. To this end, we need to find matrices that have similar properties in the chosen finite field as the orthogonal matrices obtained via SVD.
Specifically, assuming that all the transition matrices are chosen from a finite field $F_q$, where $q$ is power of prime, the following matrices are defined:
\begin{itemize}
\item{Let $\bar{U}_{11}\in F_q^{k_{12-1}\times k_{2-1}}$ be a basis for $\mathcal{R}(H_{12})$.}
\item{Let $\bar{U}_{10}\in F_q^{k_{12-1}\times(k_{12-1}-k_{2-1})}$ be a basis for $\mathcal{N}(H_{12}^T)$.}
\item{Let $\bar{V}_{11}\in F_q^{k_{2-12}\times k_{2-1}}$ be a basis for $\mathcal{R}(H_{12}^T)$.}
\item{Let $\bar{V}_{10}\in F_q^{k_{2-12}\times(k_{2-12}-k_{2-1})}$ be a basis for $\mathcal {N}(H_{12})$.}
\end{itemize}
Then, we have
\begin{align*}
\left[\begin{matrix}\bar{U}_{11}^T \\ \bar{U}_{10}^T\end{matrix}\right]H_{12}\left[\begin{matrix}\bar{V}_{11} & \bar{V}_{10}\end{matrix}\right]
=\left[\begin{matrix}\bar{U}_{11}^TH_{12}\bar{V}_{11} & \bar{U}_{11}^TH_{12}\bar{V}_{10} \\ \bar{U}_{10}^TH_{12}\bar{V}_{11} & \bar{U}_{10}^TH_{12}\bar{V}_{10}\end{matrix}\right]\\
=\left[\begin{matrix}\bar{U}_{11}^TH_{12}\bar{V}_{11} & 0_{k_{2-1}\times(k_{2-12}-k_{2-1})} \\ 0_{(k_{12-1}-k_{2-1})\times k_{2-1}} & 0_{(k_{12-1}-k_{2-1})\times(k_{2-12}-k_{2-1})}\end{matrix}\right]
\end{align*}
where $\bar{U}_{11}^TH_{12}\bar{V}_{11}\in F_q^{k_{2-1}\times k_{2-1}}$ is of full rank, although it may no longer be diagonal. For notational convenience, denote $\bar{U}_{11}^TH_{12}\bar{V}_{11}$ by $\bar{D}_{12}$.

Similarly, we can obtain $\bar{U}_{21},\bar{U}_{20}, \bar{V}_{21},\bar{V}_{20}$ and $\bar{D}_{21}=\bar{U}_{21}^TH_{21}\bar{V}_{21}$ in finite field. When the field size is sufficiently large, Lemma \ref{lem:MutliplyRank3} holds with high probability (refer to Corollary \ref{cor:FF}). Therefore, following similar arguments as that in section \ref{sec:main},  the region specified in \eqref{eq:ncbound1}-\eqref{eq:ncbound7} can be achieved in finite field as well.
\begin{example}
Consider the network shown in Fig.\ref{F:2Unicast}(a) where each edge has unit capacity.  Assume the field size\footnote{A commonly used field size for random network coding is $2^8$. Here, a small field size is chosen for illustration purposes.} is given by $q=7$. With random network coding performed at intermediate nodes, one possible realization of the effective channel matrices for the equivalent two-user linear deterministic IC are given by
\begin{align*}
&H_{11}=\left[\begin{matrix}2 & 0\\ 2 & 3\end{matrix}\right];H_{12}=\left[\begin{matrix}2&1&0\\2&1&1\end{matrix}\right]\\
&H_{21}=\left[\begin{matrix}1&0\\2&3\\2&3\end{matrix}\right];H_{22}=\left[\begin{matrix}1&0&0\\2&1&0\\2&1&1\end{matrix}\right]
\end{align*}

\begin{figure}[htb]
\centering
\includegraphics[scale=0.7]{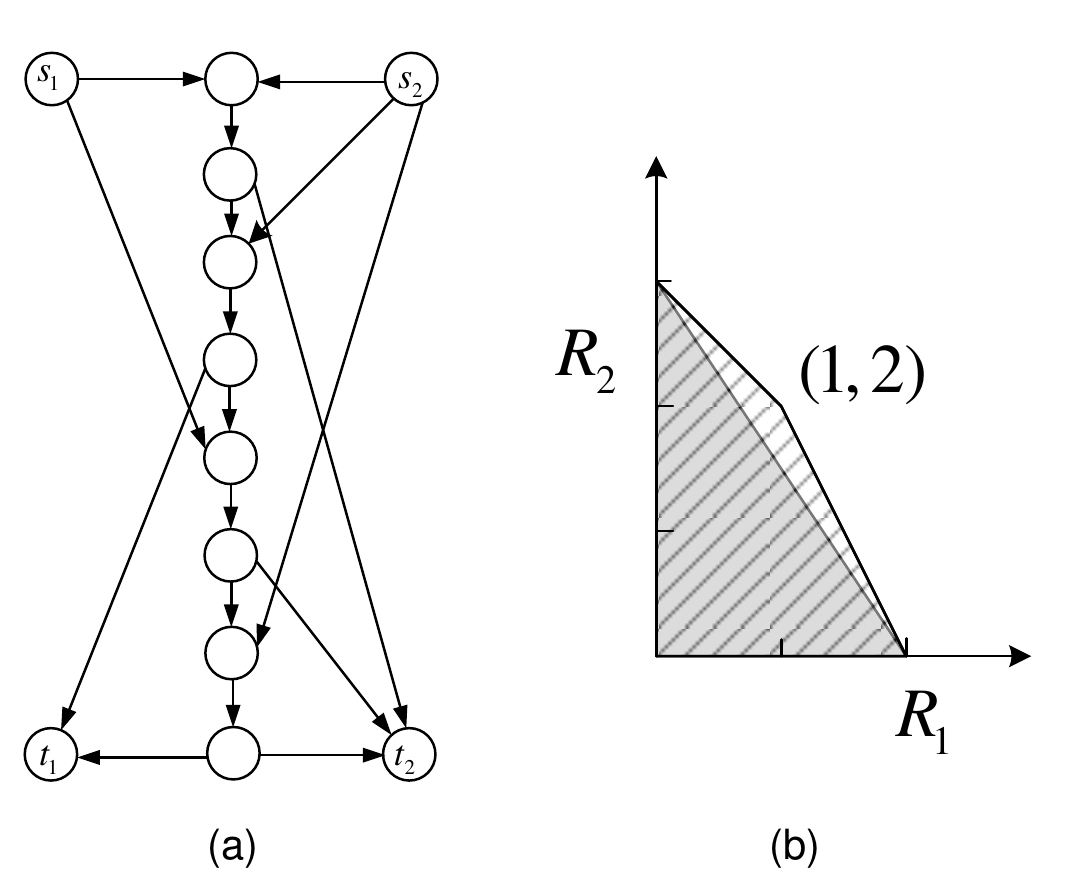}
\caption{(a) An example of double-unicast network (b) Achievable region}
\label{F:2Unicast}
\end{figure}

Then the achievable region (after removing redundant inequalities) specified in \eqref{eq:ncbound1}-\eqref{eq:ncbound7} for this example is given by
\begin{align*}
R_1&\leq 2\\
R_1+R_2&\leq 3\\
2R_1+R_2&\leq 4
\end{align*}
which is plotted in Fig.\ref{F:2Unicast}(b). Note that the gray area denotes the achievable region given by \cite{Huang11} and  \cite{Erez09} with time-sharing. Next, we give the specific precoding scheme to achieve the rate pair $(R_1,R_2)=(1,2)$ by following the achievability scheme given in Section~\ref{sec:main}. Let $\mathbf{d}_1=d_{11}$, $\mathbf{d}_2=\left[\begin{matrix}d_{21} & d_{22}\end{matrix}\right]^T$. Then, following \eqref{eq:cp11}-\eqref{eq:cp16} and \eqref{eq:cp21}-\eqref{eq:cp27}, the private-common rate splitting is given by $\mathbf{d}_{1c}=d_{11}, \mathbf{d}_{1p}=\emptyset, \mathbf{d}_{2c}=d_{21}$ and $\mathbf{d}_{2p}=d_{22}$, where $\emptyset$ denotes an empty vector. 

Assume that the following matrices are randomly generated, which will be applied to map the data symbols to the transmitted vectors:
\begin{align*}
\bar{E}_{1c}=\left[\begin{matrix}4  \\ 3 \end{matrix}\right];
\bar{E}_{2c}=\left[\begin{matrix}2\\3\end{matrix}\right];
\bar{E}_{2p}=\left[\begin{matrix}3\end{matrix}\right]
\end{align*}

$\bar{U}_{11},\bar{U}_{10},\bar{V}_{11},\bar{V}_{10}$ can be determined as well:
\begin{align*}
\bar{U}_{11}=\left[\begin{matrix}1&0\\1&1\end{matrix}\right];
\bar{U}_{10}=0;
\bar{V}_{11}=\left[\begin{matrix}2&2\\1&1\\0&1\end{matrix}\right];
\bar{V}_{10}=\left[\begin{matrix}1\\5\\0\end{matrix}\right]
\end{align*}
Similarly, we can find $\bar U_{21}, \bar U_{20}, \bar V_{21}, \bar V_{20}$
\begin{align*}
\bar{U}_{21}=\left[\begin{matrix}1&0\\2&3\\2&3\end{matrix}\right];
\bar{U}_{20}=\left[\begin{matrix}0\\3\\4\end{matrix}\right];
\bar{V}_{21}=\left[\begin{matrix}1&2\\0&3\end{matrix}\right];
\bar{V}_{22}=0
\end{align*}

Thus, the data transmitted at $s_1$ is given by
\begin{align*}
\mathbf{x}_1=\bar{V}_{21}E_{1c}\mathbf{d}_{1c}
=\left[\begin{matrix}1&2\\0&3\end{matrix}\right]\left[\begin{matrix}4 \\ 3 \end{matrix}\right]d_{11}
=\left[\begin{matrix}3d_{11} \\ 2d_{11}\end{matrix}\right]
\end{align*}

Similarly,
\begin{align*}
\mathbf{x}_2&=\left[\begin{matrix}\bar{V}_{11} & \bar{V}_{10}\end{matrix}\right]\left[\begin{matrix}E_{2c}\mathbf{d}_{2c}\\E_{2p}\mathbf{d}_{2p}\end{matrix}\right]
=\left[\begin{matrix}2&2&1\\1&1&5\\0&1&0\end{matrix}\right]\left[\begin{matrix}2d_{21}\\3d_{21}\\3d_{22}\end{matrix}\right]\\
&=\left[\begin{matrix}3d_{21}+3d_{22}\\5d_{21}+d_{22}\\3d_{21}\end{matrix}\right]
\end{align*}

Therefore, the data received at $t_1$ and $t_2$ are given by
\begin{align*}
\mathbf{y}_1&=H_{11}\left[\begin{matrix}3d_{11} \\ 2d_{11}\end{matrix}\right]+H_{12}\left[\begin{matrix}3d_{21}+3d_{22}\\5d_{21}+d_{22}\\3d_{21}\end{matrix}\right]
=\left[\begin{matrix}6d_{11}+4d_{21}\\5d_{11}\end{matrix}\right]\\
\mathbf{y}_2&=H_{22}\left[\begin{matrix}3d_{21}+3d_{22}\\5d_{21}+d_{22}\\3d_{21}\end{matrix}\right]+H_{21}\left[\begin{matrix}3d_{11} \\2d_{11}\end{matrix}\right]\\
&=\left[\begin{matrix}3d_{21}+3d_{22}+3d_{11}\\4d_{21}+5d_{11}\\5d_{11}\end{matrix}\right]
\end{align*}
With matrix inverse, both $t_1$ and $t_2$ can recover their desired symbols.

\end{example}

\section{Conclusion}\label{sec:conclusion}
In this paper, the capacity region of the two-user linear deterministic IC is derived, where the result is given in terms of the rank of the transition matrices. Our result is applicable to the scenarios where the channel matrices are correlated and/or rank deficient. To achieve the rate pairs in the capacity region, we combine the standard MIMO SVD technique and the idea of common-private rate splitting, based on which a simple linear precoder is developed. Moreover, we show that this linear deterministic IC can be used to model the double-unicast networks when random network coding is performed at all the intermediate nodes. Therefore, the capacity results derived are utilized to obtain an achievable region for the double-unicast networks, and it is proved that the region is strictly larger than the existing results in the literature.

However, there still exists a gap between our achievable region and the capacity of the double-unicast network as random linear network coding may be sub-optimal. One possible future work is to find a better network coding strategy by optimizing the transition matrices subject to the topology constraint, instead of using simple random network coding, such that the proposed achievable region is maximized.

\appendices
\section{A Useful Lemma}\label{A:usefulLemma}
\begin{lemma}\cite{Bernstein08}\label{lem:2matrixMultiply}
Let $A\in\mathbb{R}^{p\times l}$ and $B\in\mathbb{R}^{l\times k}$, $\mathrm{rank}(AB)=\mathrm{rank}(A)-\mathrm{dim}(\mathcal{N}(A^T)\cap \mathcal{R}(B^T))=\mathrm{rank}(B)-\mathrm{dim}(\mathcal{N}(A)\cap\mathcal{R}(B))$
\end{lemma}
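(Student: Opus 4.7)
The plan is to derive the two claimed equalities from the rank–nullity theorem applied to $A$ restricted to the column space of $B$, and then obtain the other form by transposing.

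First I would establish the second equality $\mathrm{rank}(AB)=\mathrm{rank}(B)-\mathrm{dim}(\mathcal{N}(A)\cap\mathcal{R}(B))$. Viewing $A$ as a linear map from $\mathbb{R}^l$ to $\mathbb{R}^p$, I would consider its restriction $A|_{\mathcal{R}(B)}:\mathcal{R}(B)\to\mathbb{R}^p$. Two observations drive everything: (i) the image of this restriction is exactly $\{Av:v\in\mathcal{R}(B)\}=\{ABx:x\in\mathbb{R}^k\}=\mathcal{R}(AB)$, so its dimension equals $\mathrm{rank}(AB)$; and (ii) the kernel of this restriction is precisely $\mathcal{N}(A)\cap\mathcal{R}(B)$, since a vector $v\in\mathcal{R}(B)$ is sent to zero by $A$ iff it also lies in $\mathcal{N}(A)$. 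Applying the rank–nullity theorem to $A|_{\mathcal{R}(B)}$ gives
\begin{equation*}
\mathrm{dim}(\mathcal{R}(B))=\mathrm{dim}(\mathcal{R}(AB))+\mathrm{dim}(\mathcal{N}(A)\cap\mathcal{R}(B)),
\end{equation*}
and rearranging yields the second equality.

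For the first equality, I would invoke the symmetry $\mathrm{rank}(AB)=\mathrm{rank}((AB)^T)=\mathrm{rank}(B^TA^T)$ together with $\mathrm{rank}(A)=\mathrm{rank}(A^T)$. Applying the formula just proved to the product $B^TA^T$ (with $B^T$ playing the role of ``$A$'' and $A^T$ playing the role of ``$B$'') gives
\begin{equation*}
\mathrm{rank}(B^TA^T)=\mathrm{rank}(A^T)-\mathrm{dim}(\mathcal{N}(B^T)\cap\mathcal{R}(A^T)),
\end{equation*}
which, after replacing $\mathrm{rank}(A^T)$ by $\mathrm{rank}(A)$, is the claimed identity (up to the evident symmetric restatement of the intersecting subspaces, both of which live in $\mathbb{R}^l$).

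There is essentially no main obstacle here: the proof is a direct application of the rank–nullity theorem followed by a transpose duality. The only minor subtlety worth flagging is keeping track of which ambient space the subspaces being intersected live in, so that the intersection is well-defined: in the second equality both $\mathcal{N}(A)$ and $\mathcal{R}(B)$ are subspaces of $\mathbb{R}^l$, and in the transposed version both $\mathcal{N}(B^T)$ and $\mathcal{R}(A^T)$ sit inside $\mathbb{R}^l$ as well. Since the result is a standard textbook identity (attributed to \cite{Bernstein08}), no further machinery is required.
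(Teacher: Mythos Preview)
Your argument is correct and is the standard proof of this textbook identity. The paper itself does not supply a proof at all: its ``proof'' consists solely of the reference ``Refer to pp.~126 in \cite{Bernstein08}.'' So you have gone further than the paper by actually writing out the rank--nullity argument.

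One point worth making explicit rather than sweeping under the phrase ``evident symmetric restatement'': the first equality \emph{as printed in the lemma} contains a typo. With $A\in\mathbb{R}^{p\times l}$ and $B\in\mathbb{R}^{l\times k}$, the subspace $\mathcal{N}(A^T)$ lies in $\mathbb{R}^{p}$ while $\mathcal{R}(B^T)$ lies in $\mathbb{R}^{k}$, so their intersection is not even defined. Your transpose derivation produces the correct expression
\[
\mathrm{rank}(AB)=\mathrm{rank}(A)-\dim\bigl(\mathcal{N}(B^T)\cap\mathcal{R}(A^T)\bigr),
\]
with both subspaces living in $\mathbb{R}^{l}$; this is the form that actually appears in Bernstein and is the intended statement. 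Your closing paragraph implicitly records this, but it would be cleaner to say outright that the printed lemma has $A^T$ and $B^T$ swapped in the first intersection, rather than treating it as a harmless relabeling. (The paper only ever uses the second equality, so the typo has no downstream effect.)
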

\begin{proof}
Refer to pp.126 in \cite{Bernstein08}.
\end{proof}
\section{Proof of Lemma \ref{lem:MutliplyRank3}}\label{A:fullrank}
As the real field can be considered as $F_q$ with $q\rightarrow\infty$. Lemma \ref{lem:MutliplyRank3} is true if the following corollary holds.
\begin{corollary}\label{cor:FF}
Given  $A_1\in\mathbb{F}_q^{p\times l_1}$, $A_2\in\mathbb{F}_q^{p\times l_2}$ and $A_3\in\mathbb{F}_q^{p\times l_3}$, and let $E_1\in \mathbb{F}_q^{l_1\times k_1}$, $E_2 \in \mathbb{F}_q^{l_2\times k_2}$ and $E_3 \in \mathbb{F}_q^{l_3\times k_3}$ be  uniformly and independently generated, then $\mathrm{rank}([\begin{matrix}A_1E_1 & A_2E_2 & A_3E_3\end{matrix}])=k_1+k_2+k_3$ holds with probability approaching to 1 when $q\rightarrow \infty$, if the following conditions are satisfied:
\begin{itemize}
\item{$k_1\leq\mathrm{rank}(A_1)$}
\item{$k_2\leq\mathrm{rank}(A_2)$}
\item{$k_3\leq\mathrm{rank}(A_3)$}
\item{$k_1+k_2\leq\mathrm{rank}([\begin{matrix}A_1 & A_2\end{matrix}])$}
\item{$k_1+k_3\leq\mathrm{rank}([\begin{matrix}A_1 & A_3\end{matrix}])$}
\item{$k_2+k_3\leq\mathrm{rank}([\begin{matrix}A_2 & A_3\end{matrix}])$}
\item{$k_1+k_2+k_3\leq\mathrm{rank}([\begin{matrix}A_1 & A_2 & A_3\end{matrix}])$}
\end{itemize}
\end{corollary}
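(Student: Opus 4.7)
The plan is to prove Corollary~\ref{cor:FF} by a Schwartz--Zippel argument, which reduces everything to showing that under the stated rank hypotheses there exists at least one assignment of $E_1,E_2,E_3$ making $[A_1E_1\mid A_2E_2\mid A_3E_3]$ full-column-rank. For any fixed choice of $k_1+k_2+k_3$ row indices, the corresponding maximal minor is a polynomial in the entries of $E_1,E_2,E_3$: each column of $A_iE_i$ is linear in one column of $E_i$, so the determinant is multilinear in the columns of $E_1,E_2,E_3$ and of total degree at most $k_1+k_2+k_3$. If just one such minor is not identically zero as a polynomial, then by Schwartz--Zippel it takes a nonzero value under a uniformly random assignment over $\mathbb{F}_q$ with probability at least $1-(k_1+k_2+k_3)/q$, which tends to $1$ as $q\to\infty$, and hence the column rank equals $k_1+k_2+k_3$ with probability approaching $1$.

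Setting $W_i:=\mathcal{R}(A_i)$, exhibiting $E_1,E_2,E_3$ for which the concatenation has rank $k_1+k_2+k_3$ is equivalent to exhibiting subspaces $V_i\subseteq W_i$ with $\dim V_i=k_i$ whose sum is direct: any basis of $V_i$ can be lifted through $A_i$ to provide the columns of $E_i$, and conversely the column span of $A_iE_i$ lies in $W_i$ with dimension at most $k_i$. This existence question is the key combinatorial step and, I expect, the main obstacle.

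I would establish it as a Hall/Rado-type criterion for subspaces: given subspaces $W_1,W_2,W_3$ of an ambient space and integers $k_i\ge 0$, subspaces $V_i\subseteq W_i$ with $\dim V_i=k_i$ and $V_1+V_2+V_3$ direct exist if and only if $\sum_{i\in S}k_i\le\dim\bigl(\sum_{i\in S}W_i\bigr)$ for every $S\subseteq\{1,2,3\}$. Necessity is immediate from $\dim(V_1+V_2+V_3)=k_1+k_2+k_3$ and $\sum_{i\in S}V_i\subseteq\sum_{i\in S}W_i$. For sufficiency I plan to argue by sequential selection: pick $V_1\subseteq W_1$ of dimension $k_1$ so that $\dim\bigl(V_1\cap(W_1\cap W_2)\bigr)$ is minimised (i.e.\ as transversal to $W_2$ as possible inside $W_1$), which combined with $k_1+k_2\le\dim(W_1+W_2)$ forces the image of $W_2$ in $W/V_1$ to have dimension at least $k_2$; extract $V_2\subseteq W_2$ of dimension $k_2$ with $V_1\cap V_2=0$; then apply the same reasoning in the quotient $W/(V_1\oplus V_2)$, invoking the triple hypothesis $k_1+k_2+k_3\le\dim(W_1+W_2+W_3)$, to produce $V_3\subseteq W_3$ of dimension $k_3$ with $V_3\cap(V_1\oplus V_2)=0$. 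The delicate point is that $V_1$ must be chosen so that its intersections with $W_2$, $W_3$, and $W_2\cap W_3$ are \emph{simultaneously} controlled; a cleaner alternative, which I would adopt if the direct construction becomes too case-heavy, is to invoke Edmonds' matroid union theorem applied to the linear matroid on $W_1\cup W_2\cup W_3$ (elements from different $W_i$ treated as formally disjoint) together with the partition matroid selecting $k_i$ items of type $i$; the matroid-union rank formula specialises precisely to the seven hypotheses.

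Once the $V_i$'s are produced, the corresponding minor is a nonzero polynomial in the entries of $E_1,E_2,E_3$, and the Schwartz--Zippel conclusion from the first paragraph finishes the proof. Lemma~\ref{lem:MutliplyRank3} over $\mathbb{R}$ then follows by the standard $q\to\infty$ limit argument noted in the appendix's opening line.
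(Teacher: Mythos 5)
Your route is sound and genuinely different from the paper's. The paper proves the corollary from scratch by subspace counting over $\mathbb{F}_q$: it uses the Gaussian-binomial count of $k$-dimensional subspaces (Fact~1) to show that a random subspace of the prescribed dimension generically meets a fixed subspace in the minimum possible dimension (Facts~2--4 and the displayed computation of $\alpha$), and then closes with a two-case verification that $\alpha\leq r_3-k_3$ under the seven rank conditions. You instead separate the probabilistic content from the combinatorial content: a single Schwartz--Zippel bound (each maximal minor is multilinear of degree at most $k_1+k_2+k_3$, so a nonvanishing minor fails with probability at most $(k_1+k_2+k_3)/q$) reduces everything to exhibiting \emph{one} good choice of $E_1,E_2,E_3$, which you correctly recast as choosing $V_i\subseteq\mathcal{R}(A_i)$ with $\dim V_i=k_i$ and $V_1+V_2+V_3$ direct, and you correctly identify the seven hypotheses as exactly the Hall-type conditions for this. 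What your approach buys is a shorter, cleaner argument with an explicit uniform failure probability and no exact generic-dimension bookkeeping or case analysis; what the paper's approach buys is self-containedness (no appeal to transversal/matroid theory) and the stronger by-product that the generic intersection dimensions are computed exactly, not just bounded.

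Two caveats you should fix before this stands as a proof. First, your sequential-selection sketch is, as you yourself note, not a proof: choosing $V_1$ only to be transversal to $W_2$ does not automatically control its position relative to $W_3$ and $W_2+W_3$, and the quotient step needs exactly the simultaneous control you flag, so the burden falls entirely on the cited theorem. Second, the tool you want is not Edmonds' matroid \emph{union} theorem but Rado's theorem on independent transversals (equivalently, matroid \emph{intersection} of the linear matroid on the formally disjoint copies of $W_1\cup W_2\cup W_3$ with the partition matroid of capacities $k_1,k_2,k_3$): applying Rado's theorem to the family consisting of $W_1$ repeated $k_1$ times, $W_2$ repeated $k_2$ times and $W_3$ repeated $k_3$ times gives existence if and only if $\dim\bigl(\sum_{i\in S}W_i\bigr)\geq\sum_{i\in S}k_i$ for all $S\subseteq\{1,2,3\}$, which is verbatim your seven conditions; if you insist on the intersection/union min--max formula instead, the claim that it ``specialises precisely'' to the seven hypotheses needs a short verification that the minimizing sets can be taken to be unions of whole copies $W_i$, which you have not supplied. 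With the citation corrected to Rado's theorem, the argument is complete and constitutes a valid alternative proof of Corollary~\ref{cor:FF}, and hence of Lemma~\ref{lem:MutliplyRank3}.
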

\begin{proof}
Before proving Corollary \ref{cor:FF}, we need to establish some useful facts.
\begin{fact}\cite{Blake06}\label{fact:countSubspace}
Let $\mathcal{V}_l(q)$ denote the vector space of dimension $l$ over the finite field $\mathbb{F}_q$, the number of distinct $k$-dimensional subspaces of $\mathcal{V}_l(q)$, a quantity denoted by $\left[\begin{matrix}l\\k\end{matrix}\right]_q$, is
\begin{equation*}
\left[\begin{matrix}l\\k\end{matrix}\right]_q=\prod_{i=0}^{k-1}\frac{(q^{l-i}-1)}{(q^{k-i}-1)}
\end{equation*}
\end{fact}

\begin{fact}\label{fact:MutliplyRank1}
Given a matrix $A\in\mathbb{F}_q^{p\times l}$, let $E$ be a random matrix uniformly generated from $\mathbb{F}_q^{l\times k}$, where $k\leq\mathrm{rank}(A)$, then $\mathrm{rank}(AE)=k$ holds with probability approaching to 1 when $q\rightarrow \infty$.
\end{fact}
\begin{proof}
According to lemma \ref{lem:2matrixMultiply}, $\mathrm{rank}(AE)=\mathrm{rank}(E)-\mathrm{dim}(\mathcal{R}(E)\cap\mathcal{N}(A))$. Since the elements in $E$ are uniformly and independently generated from the finite field $\mathbb{F}_q$, $\Pr\{\mathrm{rank}(E)=k\}=\prod_{j=0}^{k-1}\left(1-\frac{1}{q^{l-j}}\right)$ and it approaches to 1 as $q\rightarrow\infty$. Therefore, to prove Fact \ref{fact:MutliplyRank1}, we only need to show that $\lim_{q\rightarrow\infty}\Pr\{\mathrm{dim}(\mathcal R(E)\cap\mathcal N(A))=0\}=1$.

Provided with $\mathrm{rank}(E)=k$, $\mathcal{R}(E)$ can be viewed as a subspace of $\mathcal{V}_l(q)$ chosen uniformly from all the subspaces of dimension $k$. Note that the null space  of $A$, $\mathcal{N}(A)$, is another subspace of $\mathcal{V}_l(q)$  with dimension $l-\mathrm{rank}(A)$. Thus, the probability that $\mathcal{R}(E)$ has non-empty intersection with $\mathcal{N}(A)$ is given by
\begin{equation*}
\begin{aligned}
&\Pr\{\mathrm{dim}(\mathcal R(E)\cap\mathcal N(A))\geq 1\}
=\frac{\left[\begin{matrix}l-\mathrm{rank}(A)\\ 1\end{matrix}\right]_q\left[\begin{matrix}l-1\\k-1\end{matrix}\right]_q}
      {\left[\begin{matrix}l\\k\end{matrix}\right]_q}\\
&=\frac{(q^k-1)(q^{l-\mathrm{rank}(A)}-1)}{(q-1)(q^l-1)}
\end{aligned}
\end{equation*}
As a result, $\lim_{q\rightarrow\infty}\Pr\{\mathrm{dim}(\mathcal R(E)\cap\mathcal N(A))\geq 1\}=\lim_{q\rightarrow\infty}\frac{1}{q^{\mathrm{rank}(A)-k+1}}=0$ since $k\leq\mathrm{rank}(A)$. Therefore, $\lim_{q\rightarrow\infty}\Pr\{\mathrm{dim}(\mathcal R(E)\cap\mathcal N(A))=0\}=1$. This completes the proof of Fact \ref{fact:MutliplyRank1}.
\end{proof}

\begin{fact}\label{fact:MutliplyRank2}
Given matrices $A_1\in\mathbb{F}_q^{p\times l_1}$, $A_2\in\mathbb{F}_q^{p\times l_2}$, let $E_1$ and $E_2$ be two matrices uniformly and independently generated from $\mathbb{F}_q^{l_1\times k_1}$ and $\mathbb{F}_q^{l_2\times k_2}$, respectively. When $q\rightarrow\infty$, $\mathrm{rank}([\begin{matrix}A_1E_1 & A_2E_2\end{matrix}])=k_1+k_2$ holds with probability approaching to 1 if the following conditions are satisfied:
\begin{itemize}
\item{$k_1\leq\mathrm{rank}(A_1)$}
\item{$k_2\leq\mathrm{rank}(A_2)$}
\item{$k_1+k_2\leq\mathrm{rank}([\begin{matrix}A_1 & A_2\end{matrix}])$}
\end{itemize}
\end{fact}
\begin{proof}
For notational convenience, denote $\mathrm{rank}(A_1)$, $\mathrm{rank}(A_2)$ and $\mathrm{dim}(\mathcal R(A_1)\cap\mathcal R(A_2))$ by $r_1$, $r_2$ and $r_{12}$ respectively.
\begin{equation*}
\begin{aligned}
&\mathrm{rank}\left([\begin{matrix}A_1E_1 & A_2E_2\end{matrix}]\right)\\
&=\mathrm{rank}(A_1E_1)+\mathrm{rank}(A_2E_2)-\mathrm{dim}(\mathcal{R}(A_1E_1)\cap\mathcal{R}(A_2E_2))\\
&\overset{(a)}{=}k_1+k_2-\mathrm{dim}(\mathcal{R}(A_1E_1)\cap\mathcal{R}(A_2E_2))
\end{aligned}
\end{equation*}
where (a) follows from Fact \ref{fact:MutliplyRank1} as $k_1\leq r_1$ and $k_2\leq r_2$. Therefore, to prove Fact \ref{fact:MutliplyRank2}, we only need to show that, when $q\rightarrow\infty$, $\mathrm{dim}(\mathcal{R}(A_1E_1)\cap\mathcal{R}(A_2E_2))=0$ holds with overwhelm probability.  
Note that $\mathcal R(A_1E_1)$ can be viewed as a random subspace of $\mathcal R(A_1)$ with dimension $k_1$. Similarly, $\mathcal R(A_2E_2)$ can be viewed as a random subspace of $\mathcal R(A_2)$ with dimension $k_2$. Therefore, the probability that $\mathcal R(A_1E_1)$ has non-zero intersection with $\mathcal R(A_2E_2)$ is given by
\begin{align*}
&\Pr\{\mathrm{dim}(\mathcal{R}(A_1E_1)\cap\mathcal{R}(A_2E_2))\geq 1\}\\\displaybreak
&=\frac{\left[\begin{matrix}r_{12}\\ 1\end{matrix}\right]_q
           \left[\begin{matrix}r_1-1\\k_1-1\end{matrix}\right]_q
           \left[\begin{matrix}r_2-1\\k_2-1\end{matrix}\right]_q}
    {\left[\begin{matrix}r_1\\k_1\end{matrix}\right]_q
    \left[\begin{matrix}r_2\\k_2\end{matrix}\right]_q}\\
&=\frac{(q^{k_1}-1)(q^{k_2}-1)(q^{r_{12}}-1)}{(q^{r_1}-1)(q^{r_2}-1)(q-1)}
\end{align*}
 As a result, $\lim_{q\rightarrow\infty}\Pr\{\mathrm{dim}(\mathcal{R}(A_1E_1)\cap\mathcal{R}(A_2E_2))\geq 1\}=\lim_{q\rightarrow\infty}\frac{q^{k_1+k_2+r_{12}}}{q^{r_1+r_2+1}}=0$, where the last equality follows since $k_1+k_2\leq\mathrm{rank}([\begin{matrix}A_1&A_2\end{matrix}])=\mathrm{rank}(A_1)+\mathrm{rank}(A_2)-\mathrm{dim}(\mathcal{R}(A_1)\cap\mathcal{R}(A_2))=r_1+r_2-r_{12}$. Therefore, we can conclude that, with overwhelm probability, $\mathrm{dim}(\mathcal{R}(A_1E_1)\cap\mathcal{R}(A_2E_2))=0$. This completes the proof of Fact \ref{fact:MutliplyRank2}.
\end{proof}

Now, we are ready to prove Corollary \ref{cor:FF}. For brevity, denote $\mathrm{rank}(A_i)$ by $r_i$, $i\in\{1,2,3\}$, and $\mathrm{dim}(\mathcal R(A_i)\cap\mathcal R(A_j))$ by $r_{ij}$, $i,j\in\{1,2,3\},i\neq j$. Therefore, $\mathrm{rank}[\begin{matrix}A_i&A_j\end{matrix}]=r_i+r_j-r_{ij}$.


Assume that the given conditions in Corollary \ref{cor:FF} are satisfied, according to Fact \ref{fact:MutliplyRank1} and Fact \ref{fact:MutliplyRank2},   $\mathrm{dim}(\mathcal R(A_iE_i))=k_i,i\in\{1,2,3\}$ and $\mathrm{dim}(\mathcal{R}(A_iE_i)\cap\mathcal{R}(A_jE_j))=0$, $i,j\in\{1,2,3\},i\neq j$ holds with probability approaching to 1. Therefore, we have
\begin{align*}
&\mathrm{rank}\left(\left[\begin{matrix}A_1E_1 & A_2E_2 & A_3E_3\end{matrix}\right]\right)\\
=&\mathrm{rank}\left(\left[\begin{matrix}A_1E_1 & A_2E_2\end{matrix}\right]\right)
 +\mathrm{rank}\left(A_3E_3\right)\\
 &-\mathrm{dim}\left((\mathcal{R}(A_1E_1)+\mathcal{R}(A_2E_2))\cap\mathcal{R}(A_3E_3)\right)\\
=&\mathrm{rank}\left(A_1E_1\right)+\mathrm{rank}\left(A_2E_2\right)+\mathrm{rank}\left(A_3E_3\right)\\
  &-\mathrm{dim}\left(\mathcal{R}(A_1E_1)\cap\mathcal{R}(A_2E_2)\right)\\
  &-\mathrm{dim}\left((\mathcal{R}(A_1E_1)+\mathcal{R}(A_2E_2))\cap\mathcal{R}(A_3E_3)\right)\\
=&k_1+k_2+k_3-\mathrm{dim}\left((\mathcal{R}(A_1E_1)+\mathcal{R}(A_2E_2))\cap\mathcal{R}(A_3E_3)\right)
\end{align*}

Therefore, to prove Corollary \ref{cor:FF}, it is sufficient to show that $\mathrm{dim}\left((\mathcal{R}(A_1E_1)+\mathcal{R}(A_2E_2))\cap\mathcal{R}(A_3E_3)\right)=0$ holds with overwhelm probability. Denote the dimension of the intersection between $(\mathcal{R}(A_1E_1)+\mathcal{R}(A_2E_2))$ and $\mathcal{R}(A_3)$ by $\alpha$, i.e., $\alpha\triangleq\mathrm{dim}((\mathcal{R}(A_1E_1)+\mathcal{R}(A_2E_2))\cap\mathcal{R}(A_3))$.  The, the probability that $\mathcal{R}(A_3E_3)$ has non-empty intersection with $(\mathcal{R}(A_1E_1)+\mathcal{R}(A_2E_2))$ is given by
\begin{equation*}
\begin{aligned}
&\Pr\{\mathrm{dim}\left((\mathcal{R}(A_1E_1)+\mathcal{R}(A_2E_2))\cap\mathcal{R}(A_3E_3)\right)\geq 1\}\\
&=\frac{\left[\begin{matrix}\alpha\\1\end{matrix}\right]_q\left[\begin{matrix}r_3-1\\k_3-1\end{matrix}\right]_q}
      {\left[\begin{matrix}r_3\\k_3\end{matrix}\right]_q}=\frac{(q^{\alpha}-1)(q^{k_3}-1)}{(q^{r_3}-1)(q-1)}
\end{aligned}
\end{equation*}
$\lim_{q\rightarrow\infty}\Pr\{\mathrm{dim}\left((\mathcal{R}(A_1E_1)+\mathcal{R}(A_2E_2))\cap\mathcal{R}(A_3E_3)\right)\geq 1\}=0$ if $\alpha\leq r_3-k_3$. Therefore, in order to prove Corollary \ref{cor:FF}, it is sufficient to show that $\alpha=\mathrm{dim}((\mathcal{R}(A_1E_1)+\mathcal{R}(A_2E_2))\cap\mathcal{R}(A_3))\leq r_3-k_3$ holds with overwhelm probability.

\begin{align}
\alpha=&\mathrm{dim}((\mathcal{R}(A_1E_1)+\mathcal{R}(A_2E_2))\cap\mathcal{R}(A_3))\nonumber\\
=&\mathrm{rank}\left(\left[\begin{matrix}A_1E_1&A_2E_2\end{matrix}\right]\right)+\mathrm{rank}(A_3)\nonumber\\
  &-\mathrm{rank}\left(\left[\begin{matrix}A_1E_1&A_2E_2&A_3\end{matrix}\right]\right)\nonumber\\
=&k_1+k_2+r_3-\mathrm{rank}\left(\left[\begin{matrix}A_1E_1&A_2E_2&A_3\end{matrix}\right]\right)\nonumber\\
=&k_1+k_2+r_3-\mathrm{rank}\left(\left[\begin{matrix}A_1E_1&A_3\end{matrix}\right]\right)-\mathrm{rank}(A_2E_2)\nonumber\\
 &+\mathrm{dim}(\mathcal{R}(A_2E_2)\cap(\mathcal{R}(A_1E_1)+\mathcal{R}(A_3)))\nonumber\\
=&k_1+r_3-\mathrm{rank}\left(\left[\begin{matrix}A_1E_1&A_3\end{matrix}\right]\right)\nonumber\\
 &+\mathrm{dim}(\mathcal{R}(A_2E_2)\cap(\mathcal{R}(A_1E_1)+\mathcal{R}(A_3)))\nonumber\\
=&\mathrm{dim}(\mathcal{R}(A_1E_1)\cap\mathcal{R}(A_3))\nonumber\\
&+\mathrm{dim}(\mathcal{R}(A_2E_2)\cap(\mathcal{R}(A_1E_1)+\mathcal{R}(A_3)))\label{eq:pl4_1}
\end{align}
Next, we proceed to calculate $\mathrm{dim}(\mathcal{R}(A_1E_1)\cap\mathcal{R}(A_3))$ and $\mathrm{dim}(\mathcal{R}(A_2E_2)\cap(\mathcal{R}(A_1E_1)+\mathcal{R}(A_3)))$. For notational convenience, denote $\gamma=\mathrm{rank}(\left[\begin{matrix}A_1 & A_2 & A_3\end{matrix}\right])$ and define $(g)^{+}=\max\{g,0\}$.

\begin{fact}\label{fact:intersection}
$\lim_{q\rightarrow\infty}\Pr\{\mathrm{dim}(\mathcal R(A_3)\cap\mathcal{R}(A_1E_1))=(k_1-r_1+r_{13})^{+}\}=1$
\end{fact}
\begin{proof}
The probability that the dimension of the intersection between $\mathcal{R}(A_3)$ and $\mathcal{R}(A_1E_1)$ is greater or equal to $j$ is given by
\begin{equation*}
\begin{aligned}
&\Pr\{\mathrm{dim}(\mathcal R(A_3)\cap\mathcal{R}(A_1E_1))\geq j)\}
=\frac{\left[\begin{matrix}r_{13}\\j\end{matrix}\right]_q\left[\begin{matrix}r_1-j\\k_1-j\end{matrix}\right]_q}
      {\left[\begin{matrix}r_1\\k_1\end{matrix}\right]_q}\\
&=\frac{\prod_{i=0}^{j-1}{\frac{(q^{r_{13}-i}-1)}{(q^{j-i}-1)}}\prod_{i=0}^{k_1-j-1}{\frac{(q^{r_1-j-i}-1)}{(q^{k_1-j-i}-1)}}}
      {\prod_{i=0}^{k_1-1}{\frac{(q^{r_1-i}-1)}{(q^{k_1-i}-1)}}}
\end{aligned}
\end{equation*}
Therefore, $\lim_{q\rightarrow\infty}{\Pr\{\mathrm{dim}(\mathcal R(A_3)\cap\mathcal{R}(A_1E_1)\geq j)\}}=\frac{q^{(r_{13}-j)j}q^{(r_1-k_1)(k_1-j)}}{q^{(r_1-k_1)k_1}}=q^{-j^2+(k_1+r_{13}-r_1)j}$. If $j>(k_1-r_1+r_{13})^{+}$, $\lim_{q\rightarrow\infty}{\Pr\{\mathrm{dim}(\mathcal R(A_3)\cap\mathcal{R}(A_1E_1))\geq j)\}}=0$. For any $E_1\in F_q^{l_1\times k_1}$, it can be verified that $\mathrm{dim}(\mathcal R(A_3)\cap\mathcal{R}(A_1E_1))\geq (k_1-r_1+r_{13})^{+}$.  Thus, we can conclude that $\lim_{q\rightarrow\infty}\Pr\{\mathrm{dim}(\mathcal R(A_3)\cap\mathcal{R}(A_1E_1))=(k_1-r_1+r_{13})^{+}\}=1$.
\end{proof}

Following similar arguments as that in the proof of Fact \ref{fact:intersection}, with overwhelm probability, we have
\begin{equation}
\begin{aligned}
&\mathrm{dim}(\mathcal{R}(A_2E_2)\cap(\mathcal{R}(A_1E_1)+\mathcal{R}(A_3)))\\
=&(k_2-r_2+\mathrm{dim}(\mathcal{R}(A_2)\cap(\mathcal{R}(A_1E_1)+\mathcal{R}(A_3))))^{+}\label{eq:pl4_2}
\end{aligned}
\end{equation}
Furthermore,
\begin{align}
&\mathrm{dim}(\mathcal{R}(A_2)\cap(\mathcal{R}(A_1E_1)+\mathcal{R}(A_3)))\nonumber\\
=&\mathrm{rank}(A_2)+\mathrm{rank}\left(\left[\begin{matrix}A_1E_1&A_3\end{matrix}\right]\right)-\mathrm{rank}\left(\left[\begin{matrix}A_1E_1&A_2&A_3\end{matrix}\right]\right)\nonumber\\
=&r_2+\mathrm{rank}(A_1E_1)+\mathrm{rank}(A_3)-\mathrm{dim}(\mathcal{R}(A_3)\cap\mathcal{R}(A_1E_1))\nonumber\\
 &-\mathrm{rank}\left(\left[\begin{matrix}A_1E_1&A_2&A_3\end{matrix}\right]\right)\nonumber\\
=&r_2+k_1+r_3-(k_1-r_1+r_{13})^{+}-\mathrm{rank}\left(\left[\begin{matrix}A_1E_1&A_2&A_3\end{matrix}\right]\right)\nonumber\\
=&r_2+k_1+r_3-(k_1-r_1+r_{13})^{+}-\mathrm{rank}(A_1E_1)\nonumber\\
 &-\mathrm{rank}\left(\left[\begin{matrix}A_2&A_3\end{matrix}\right]\right)
  +\mathrm{dim}(\mathcal{R}(A_1E_1)\cap(\mathcal{R}(A_2)+\mathcal{R}(A_3)))\nonumber\\
=&r_{23}-(k_1-r_1+r_{13})^{+}+\mathrm{dim}(\mathcal{R}(A_1E_1)\cap(\mathcal{R}(A_2)+\mathcal{R}(A_3)))\nonumber\\
\overset{(b)}{=}&r_{23}-(k_1-r_1+r_{13})^{+}\nonumber\\
&+(k_1-r_1+\mathrm{dim}(\mathcal{R}(A_1)\cap(\mathcal{R}(A_2)+\mathcal{R}(A_3))))^{+}\nonumber\\
=&r_{23}-(k_1-r_1+r_{13})^{+}+(k_1+r_2+r_3-r_{23}-\gamma)^{+}\label{eq:pl4_3}
\end{align}
where (b) follows from the similar arguments as that in the proof of Fact \ref{fact:intersection}. By substituting \eqref{eq:pl4_3} into \eqref{eq:pl4_2}, we get
\begin{equation}
\begin{aligned}
&\mathrm{dim}(\mathcal{R}(A_2E_2)\cap(\mathcal{R}(A_1E_1)+\mathcal{R}(A_3)))=(k_2-r_2+r_{23}\\
&-(k_1-r_1+r_{13})^{+}+(k_1+r_2+r_3-r_{23}-\gamma)^{+})^{+}\label{eq:pl4_4}
\end{aligned}
\end{equation}
With Fact \ref{fact:intersection} and by substituting \eqref{eq:pl4_4} into \eqref{eq:pl4_1}, we have
\begin{align*}
\alpha=&\mathrm{dim}((\mathcal{R}(A_1E_1)+\mathcal{R}(A_2E_2))\cap\mathcal{R}(A_3))\\
=&(k_1-r_1+r_{13})^{+}+(k_2-r_2+r_{23}-(k_1-r_1+r_{13})^{+}\\
 &+(k_1+r_2+r_3-r_{23}-\gamma)^{+})^{+}
\end{align*}
The remaining task is to show that $\alpha\leq r_3-k_3$. Following cases are considered.\\
\textbf{Case I}: $k_1-r_1+r_{13}\leq 0$\\
In this case, $\alpha=(k_2-r_2+r_{23}+(k_1+r_2+r_3-r_{23}-\gamma)^{+})^{+}$. If $\alpha=0$, the result holds trivially as $k_3\leq r_3=\mathrm{rank}(A_3)$. Therefore, we only need to show that $k_2-r_2+r_{23}+(k_1+r_2+r_3-r_{23}-\gamma)^{+}\leq (r_3-k_3)$. If $k_1+r_2+r_3-r_{23}-\gamma\leq 0$, it reduces to $k_2-r_2+r_{23}\leq r_3-k_3$, which is equivalent to the given condition $k_2+k_3\leq r_2+r_3-r_{23}=\mathrm{rank}([\begin{matrix}A_2&A_3\end{matrix}])$. On the other hand, if $k_1+r_2+r_3-r_{23}-\gamma>0$, it reduces to $k_2+(k_1-\gamma)\leq -k_3$, which is equivalent to the given condition $k_1+k_2+k_3\leq\gamma=\mathrm{rank}([\begin{matrix}A_1&A_2&A_3\end{matrix}])$. Therefore, we conclude that $\alpha\leq(r_3-k_3)$ is true when $k_1-r_1+r_{13}\leq 0$. \\
\textbf{Case II}: $k_1-r_1+r_{13}>0$\\
In this case, to prove $\alpha\leq(r_3-k_3)$, it is sufficient to show that $(k_2-r_2+r_{23}-k_1+r_1-r_{13}+(k_1+r_2+r_3-r_{23}-\gamma)^{+})^{+}\leq (r_1+r_3-r_{13})-(k_1+k_3)$. As $k_1+k_3\leq r_1+r_3-r_{13}=\mathrm{rank}([\begin{matrix}A_1&A_3\end{matrix}])$, it holds trivially if the left hand side is reduced to zero. Therefore, we only need to show $(k_1+r_2+r_3-r_{23}-\gamma)^{+}\leq (r_2+r_3-r_{23})-(k_2+k_3)$. Again, it is true when the left hand side is zero as $k_2+k_3\leq\mathrm{rank}([\begin{matrix}A_2&A_3\end{matrix}])=r_2+r_3-r_{23}$. On the other hand, when the left hand side is greater than zero, it is further reduced to $k_1+k_2+k_3\leq \gamma$, which is exactly the same as the last condition given in Corollary \ref{cor:FF}.

This completes the proof of Corollary \ref{cor:FF} and hence Lemma \ref{lem:MutliplyRank3}.
\end{proof}

\section{Proof of Lemma \ref{lem:entropyMatrix}}\label{A:entropyMatrix}
This lemma is proved by similar arguments as that in section III of \cite{Prabhakaran07}. Denote $\mathrm{rank}(A)$, $\mathrm{rank}(B)$ and $\mathrm{rank}\left(\left[\begin{matrix}A \\ B \end{matrix}\right]\right)$ by $r_A$, $r_B$ and $r_{AB}$ respectively. Let $N_1$ be a matrix of size $l\times(l-r_{AB})$ whose column vectors form a basis for $\mathcal{N}\left(\left[\begin{matrix}A \\ B \end{matrix}\right]\right)$. Then we can find a matrix $N_2$ of size $l\times(r_{AB}-r_{B})$ such that the column vectors of $N_1$ and $N_2$ form a basis for $\mathcal{N}(B)$. Moreover, let $N_3\in\mathbb{R}^{l\times r_B}$, be the basis of $\mathcal {R}(B)$. Therefore, $\left[\begin{matrix}N_1 & N_2 & N_3\end{matrix}\right]$ spans the input space and we can find $\mathbf{x}'$ such that $\mathbf{x}=\left[\begin{matrix}N_1 & N_2 & N_3\end{matrix}\right]\mathbf{x}'$.
Thus,
\begin{align}
&H(A\mathbf{x}|B\mathbf{x})\\
&=H\left(A[\begin{matrix}N_1 & N_2 & N_3\end{matrix}]\mathbf{x}'\mid B[\begin{matrix}N_1 & N_2 & N_3\end{matrix}]\mathbf{x}'\right)\nonumber\\
&=H\left([\begin{matrix}0_{l\times(l-r_{AB})} & AN_2 & AN_3\end{matrix}]\mathbf{x}'\mid [\begin{matrix}0 & BN_3\end{matrix}]\mathbf{x}'\right)\label{eq:ConditionalEntropy}
\end{align}
Write $\mathbf{x}'=\left[\begin{matrix}\mathbf{x}_1' \\ \mathbf{x}_2' \\ \mathbf{x}_3'\end{matrix}\right]$. \eqref{eq:ConditionalEntropy} can be written as:
\begin{align*}
H(A\mathbf{x}\mid B\mathbf{x})&=H\left(AN_2\mathbf{x}_2'+AN_3\mathbf{x}_3'\mid BN_3\mathbf{x}_3'\right)
\end{align*}
As $\mathcal R(B)=\mathcal R(N_3)$, $\mathrm{dim}(\mathcal N(B)\cap\mathcal R(N_3))=0$. According to Lemma \ref{lem:2matrixMultiply}, $\mathrm{rank}(BN_3)=N_3=r_B$. Thus, $\mathbf{x}_3'$ is uniquely determined by $BN_3\mathbf{x}_3'$ and we have
\begin{align*}
H(A\mathbf{x}\mid B\mathbf{x})&=H\left(AN_2\mathbf{x}_2'+AN_3\mathbf{x}_3'\mid BN_3\mathbf{x}_3',\mathbf{x}_3'\right)\\
&=H\left(AN_2\mathbf{x}_2'\mid BN_3\mathbf{x}_3',\mathbf{x}_3'\right)\\
&\leq H(AN_2\mathbf{x}_2')\leq\mathrm{rank}(AN_2)\\
&\leq\mathrm{rank}(N_2)=r_{AB}-r_{B}
\end{align*}
Thus, the lemma follows.

\bibliographystyle{ieeetr}
\bibliography{ICReference}
\end{document}